\DeclareMathOperator*{\argmax}{\arg\max}
\newtheorem{myexp}{Example}
\begin{document}

\title{Diversification on Big Data in Query Processing}

\author{Meifan Zhang        \and
        Hongzhi Wang \and
        Jianzhong Li\and
        Hong Gao
}


\institute{Department of Computer Science and Technology, Harbin Institute of Technology\\
              \email{wangzh@hit.edu.cn}           %
}

\date{Received: date / Accepted: date}

\maketitle

\begin{abstract}
Recently, in the area of big data, some popular applications such as web search engines and recommendation systems, face the problem to diversify results during query processing. In this sense, it is both significant and essential to propose methods to deal with big data in order to increase the diversity of the result set. In this paper, we firstly define a set's diversity and an element's ability to improve the set's overall diversity. Based on these definitions, we propose a diversification framework which has good performance in terms of effectiveness and efficiency. Also, this framework has theoretical guarantee on probability of success. Secondly, we design implementation algorithms based on this framework for both numerical and string data. Thirdly, for numerical and string data respectively, we carry out extensive experiments on real data to verify the performance of our proposed framework, and also perform scalability experiments on synthetic data.
\end{abstract}


\section{Introduction}

	 Nowadays, as the amount of information dramatically increases in several popular applications, such as recommendation systems and web search engines, people are not only satisfied with relevant search results but also require that more relevant yet diverse topics are covered by a limited number of search results. Therefore, researchers pay considerable attention to the diversity of the search results. Many diversification methods specific to normal-scale data are proposed and applied in practice in order to improve the users' experience by avoiding the retrieval of too homogeneous results~\cite{Search_result_diversification}.
	
	 In big data era, the diversification of query results on big data is extremely important and meaningful for the following two reasons.
	
	 The first reason is that when dealing with big data, users can only access a small share of query results due to the huge amount. Thus, we have to return results of high quality. One of the proper indicators of results' quality is the diversity of the final result set. By providing users with query results diverse enough, we can help them obtain various and sufficient information in reasonable time and therefore, can improve users' satisfaction. For example, if a user wants to choose some detective stories, and he also wants to get the whole picture of this kind of books at the same time, then, he submits a search query of detective books to a book recommendation application. The expected result is a list of books written by different authors with different nationalities and published in a wide range of time rather than a list of books containing different editions of the same book. The diversity of the former list is obviously higher than that of the latter one. As the user wants to obtain a thorough knowledge of detective stories the book list with high diversity meets the demand of the user, which leads to a higher degree of user's satisfaction.
	
	 The second reason is that diversification methods on big data face several technical challenges and thus, is hard to be implemented. The memory is limited, but as a result of big data's essential characteristic-huge amount, we have difficulty in dealing with so much data in relatively small memory. This challenge prevents us from fully analysing big data and extracting diverse results from the big data to generate the final result set. Additionally, due to big data's huge amount, super-linear algorithms are not acceptable because big data can be scanned and processed only once or even less.
	
	 These two reasons above also make the diversification problem in big data hard to solve. In this sense, for big data, we should design linear or sub-linear space algorithms in order to retrieve diverse results in limited memory and also, we should design on-line algorithms for the restriction of scan times.

	While in query processing, many diversification methods have been proposed and implemented specific for normal-scale data~\cite{Efficient_diversity-aware_search,On_query_result_diversification}. However, they are not applicable for big data. This is because existing algorithms fail to take into consideration big data's huge amount and most of them have to scan the input data many times. Let us take the classical algorithm-Maximal Marginal Relevance (MMR) \cite{On_query_result_diversification} as an example. MMR incrementally selects an element from the candidate set and inserts it into the final result set. As for each incremental iteration, we have to scan the candidates once. Therefore, the MMR algorithm has to scan the input data for many times and is not suitable for big data.
	
	The query result diversification specialized in big data \linebreak[4] search has not been systematically studied so far, even though query result diversification over big data is meaningful and helpful in query processing. In the era of big data, for some popular applications, such as search engines and recommendation systems, we have to deal with a huge amount of on-line data, which can only be scanned no more than once. Therefore, it is both significant and essential to improve the diversity of results from big data within one scan.

This paper attempts to solve this problem. In this paper, we propose a stream-model-based diversification framework. As we know, this is the first one to diversify the results for big data within one scan. This framework is both effective and efficient in improving the diversity of the result set with low time complexity, low space cost and low computational overhead. Additionally, this framework processes the input data in on-line style and can be implemented for various data types concretely with its advantages.
	
	In this paper, we make the following contributions:
		
\begin{list}{\labelitemi}{\leftmargin=1em}\itemsep 0pt \parskip 0pt	
\item We propose a framework for diversification of query processing on big data. This framework allows us to improve the diversity of the final result set within one scan with guaranteed performance. In order to describe this framework thoroughly, we firstly present definitions of diversity, possible diversity gain (to describe an element's ability to improve overall diversity). Secondly, based on these definitions, the framework to solve the diversification problem on big data is proposed, formulated and evaluated. We prove that by assigning proper parameters in this framework, the probability of success in a single run can be guaranteed. As we know, this is the first paper to study the diversification method for big data within one scan.
	
\item On the basis of the proposed diversification framework, we design implementation algorithms for two common data types, numerical data and string data, to diversify the final returned results. We prove that the proposed framework can be implemented specific to different data types effectively and efficiently without degrading the framework's performance.
	
\item To verify the performance of proposed algorithms, we perform extensive experiments. From experimental results, our algorithms are verified to be effective and efficient, and we also study the factors that influence the performance of our methods and their influence to the performance. Then, in order to test the scalability of our proposed framework, we conduct experiments on synthetic data with a tremendous amount and study the framework's performance. Experimental results also demonstrate that the scalability and efficiency of our approach outperform existing approaches significantly.
\end{list}
	
	The rest of this paper is structured as follows. A brief review of related work about existing diversification methods is presented in Section~\ref{section: related work}. Also, Section~\ref{section: related work} demonstrates the intrinsic distinction between our proposed method and existing methods. Then in Section~\ref{section: framework}, we present a diversification framework which allows us to improve the diversity of the final result set within one scan with a guarantee of performance and effectiveness. In Section~\ref{section: implementation}, as for numerical data and string data, we present the basic definitions and the corresponding detailed implementation algorithms. Next, Section~\ref{section: experiment} illustrates the experimental results of our proposed methods on both real data and synthetic data, and then show thorough evaluations of them. Finally, Section~\ref{section: conclusions} concludes this paper and discusses some challenging yet interesting directions for further research.

\section{Related Work} \label{section: related work}
	Diversification in query processing has aroused many researchers' attention and interest these years. It can help enhance users' satisfaction in three aspects. Based on these aspects, diversification can be divided into three categories. The first category is content-based diversification, or similarity-based diversification \cite{Avoiding_monotony_improving_the_diversity_of_recommendation_lists}. It aims to return objects which are dissimilar enough to each other. The second category is intent-based diversification, or coverage-based diversification, which cover different aspects of a query, by returning objects from different categories or satisfying various possible users' intentions \cite{Diversifying_search_results}, \cite{Highlighting_Diverse_Concepts_in_Documents}, \cite{Bypass_rates_reducing_query_abandonment_using_negative_inferences}. The third category is novelty-based diversification, a method to return objects containing new information previously not seen before \cite{Novelty_and_diversity_in_information_retrieval_evaluation}, \cite{Novelty_and_redundancy_detection_in_adaptive_filtering}.

	 Drosou et al.~\cite{Search_result_diversification} categorize diversification methods into three parts, namely, content-based, intent-based and novelty-based diversification. Then they survey, compare and study the corresponding definitions, implementation algorithms of these three categories thoroughly and deeply.
	
	Vieira et al.~\cite{On_query_result_diversification} evaluate six existing content-based diversification methods thoroughly, and then propose two new approaches, namely, the Greedy with Marginal Contribution and the Greedy with Neighbourhood Expansion. Angel et al. \cite{Efficient_diversity-aware_search} focus on the problem of diversity-aware search which ranks results according to both relevance and dissimilarity to others. Then they propose a diversification method named DIVGEN which is qualified in terms of efficiency and effectiveness.
	
	As for intent-based diversification, researchers propose \linebreak[4] various methods in order to cover as many user intentions and topics as possible. In \cite{Improving_recommendation_lists_through_topic_diversification}, Ziegler et al. first propose a new metric, intra-list similarity, to represent the topical diversity of recommendation lists, and then a new method named topic diversification is proposed to decrease the intra-list similarity of lists. In \cite{Improving_personalized_web_search_using_result_diversification}, Radlinski et al. discuss three result diversification methods, namely, the Most Frequent method, the Maximum Result Variety method and the Most Satisfied method so as to improve personalized web search. Additionally, Capannini et al. \cite{Efficient_diversification_of_web_search_results} present an original method, OptSelect, in order to effectively and efficiently accomplish the diversification task.
	
	With regard to novelty-based diversification, not so many algorithms or implementation techniques have been put forward. Clarke et al. make a thorough distinction between novelty-the need to avoid redundancy-and diversity-the need to resolve ambiguity in \cite{Novelty_and_diversity_in_information_retrieval_evaluation}.

    Greedy algorithms are widely adopted to diversify the query results \cite{Improving_recommendation_lists_through_topic_diversification}, \cite{It_takes_variety_to_make_a_world_diversification_in_recommender_systems}, \cite{Efficient_computation_of_diverse_query_results}, they find the N elements first, and then test whether replacing an element in the origin set with a further element increases the diversity or not. Mostly, multiple passes of input data are needed in most of the methods.
	

	The essential difference between our proposed method and existing ones is that in our method, the input data only need to be scanned once or less, and during the scanning procedure, we select the element which will diversify the result set and put it into memory. On the contrary, existing diversification methods  mostly have to scan and process input data several times and also, cannot make effective use of available memory. In this sense, our presented method can solve the technical challenges which diversification tasks are faced with in big data, while existing methods fail to do so.

\section{General Framework of Diversification} \label{section: framework}
	In this section, we present a general framework of diversification for big data. Such framework is suitable for various data types and can effectively improve the overall diversity of the final result set. Firstly, we introduce some basic definitions to better establish the framework. Next we describe the proposed framework and explain how it works. Then, we perform evaluation of this framework. After this, we study the probability of success in improving the overall result set's diversity through this framework. Finally, we discuss the number samples to select during the implementation of this framework.

	As far as we are concerned, our proposed framework is the first one which offers users a diversification method for big data within one scan. It also has the advantage of low time complexity as well as low computational overhead. This proposed framework can also be implemented concretely for various data types without degrading its performance. In implementation, we often carry out randomly sampling to make it suitable for big data.

	As in our proposed method, we intend to scan the input big data no more than once and the data is scanned in order, we can logically consider the input big data as streaming data and therefore, can employ the data stream model, in which, the elements $a_1,a_2, \cdots, a_n$ have to be processed one after another. These elements in the stream model are not available for random access from disk or memory and can only be scanned once.

\subsection{Basic Definitions} \label{subsection: basic definitions}

\subsubsection{Definition of Diversity}
	We assume that the final result set with regard to a user query is limited and stored in memory.  We denote the size of available memory as $m$, the number of elements in the whole data set as $n$. Our task is to increase the diversity of the elements in memory. The definition of diversity is formalized as follows.
	
	As for several elements $x_1,x_2,\cdots,x_k$, which make up a vector denoted as $X$, the diversity of $X$ is computed as $Div(X)=Div(x_1,x_2,\cdots,x_k)$. Thus, the diversity of $m$ different elements $A=\{a_1,a_2,\cdots,a_m\}$ in memory is computed as $Div(A)=Div(a_1,a_2,\cdots,a_m)$. As for different data types, we choose various indicators for $Div(A)$.
	
	With regard to numerical data, we use variance to describe the diversity while as for string data, we express diversity as the sum of edit distances. Our definition of diversity is compatible with existing diversity definitions. In \cite{Search_result_diversification}, Drosou et al. mention a definition of diversity which interprets diversity as an instance of the $p$-dispersion problem. The $p$-dispersion problem is to pick out $p$ points from given $n$ points, so that the minimum distance between any two points is maximized \cite{A_comparison_of_p-dispersion_heuristics}. In this case, the definition of diversity is the minimum distance between any two points, while another existing definition is described as the average distance of any two points.
	
	Since the variance of a numerical data set can describe the distances between numerical elements in the set and similarly, the edit distance can represent the distance between each two strings, the proposed definitions of diversity are reasonable and also compatible with existing ones. The concrete definitions will be discussed in detail in Section~\ref{section: implementation}.

\subsubsection{Definition of Possible Diversity Gain}

	As the number of elements in available memory is fixed as a constant $m$, they are denoted as $A=\{a_1,a_2,\cdots,a_m\}$, where $A$ represents the original result set. As for an input element $\varphi$, we consider its contribution to the diversity of the result set. Although $\varphi$'s contribution to the overall diversity of the final result set could be defined in various ways, we decide to choose the change in diversity value caused when replacing $\varphi$ with another element in memory. This is because the difference of the ``new'' diversity when putting $\varphi$ in memory to take place of another element and the original diversity can both intuitively and clearly describes $\varphi$'s ability to diversify current result set. The detailed definition is formalized below.
	
	Given an element $\varphi$ in the input data, its \textit{possible diversity gain} to the diversity of the available memory is defined as $PDG(\varphi)$, which is computed as follows:
\begin{displaymath}
PDG(\varphi)=\max \limits_{a_j \in A}\{Div(A \setminus\{a_j\}\cup \{\varphi\})-Div(A)\}
\end{displaymath}

	From this formula, the possible diversity gain of a certain element $\varphi$ is the maximal difference of the diversity value, where the element replaces the $j$th element $a_j$ in the memory, and the memory's original diversity value.
	
	Also, it is obvious that $PDG(\varphi)$ is only related to the elements $A=\{a_1,a_2,\cdots,a_m\}$ in main memory, and has nothing to do with other input elements. If we consider the available memory as fixed, $PDG(\varphi)$ can be thought as an ``attribute'' of $\varphi$. Such attribute of $\varphi$ is helpful in our further discussion.

\subsection{Framework}
	As discussed before, we set the size of available memory as $m$, and set the number of elements in the whole data as $n$. In this part, we will describe our proposed diversification framework thoroughly and then explain how it works.
	
	Our proposed framework is an on-line algorithm, which scans the input data no more than once with a time complexity $O(n)$. Besides, the required space is far smaller than the input size. In this sense, the algorithm is a sub-linear-space algorithm. Also, this algorithm aims to select one element from input to replace it into available memory and therefore, can increase the diversity of the final result set stored in memory.
	
	 The pseudo-code illustrated in Algorithm~\ref{Diversify-Final-Result-Set} explains the procedure of our proposed framework. This framework increases the diversity of the results in main memory.

	 Firstly, when the scanning begins, $m$ different values are selected and stored into the available memory. We use the algorithm of counting distinct elements in streaming data \cite{Counting_distinct_elements_in_a_data_stream} to implement the procedure efficiently. Additionally, suppose the original maximal PDG value is $-{\infty}$, and this can be seen in Line 3 of  Algorithm~\ref{Diversify-Final-Result-Set}.
	
	Secondly, we apply the strategy of selecting a positive integer $k<n$, scanning the first $k$ elements of the input, then calculating their corresponding PDG values and storing the maximal PDG value $PDG_{max}$ among them. Line 4 to Line 9 in Algorithm~\ref{Diversify-Final-Result-Set} clearly demonstrate this procedure. Then the scanning of data continues. Line 10 to Line 16 in Algorithm~\ref{Diversify-Final-Result-Set} describe the following condition when the algorithm is executed. If an element $u$ with PDG value larger than $PDG_{max}$ among the following elements is met, the element $\argmax\limits_{u' \in A} Div(A \setminus \{u'\} \cup \{u\})$ in the memory will then be replaced by $u$ to maximize the difference of the diversity value.
	
	Thirdly, however, if no following elements have a larger PDG value than $PDG_{max}$, the last element in the data set, i.e., the $n$th element $u_n$ is used to take place of the element $\argmax\limits_{u' \in A} Div(A \setminus \{u'\} \cup \{u_n\})$ in the memory. This condition is demonstrated in Line 17 and Line 18 of Algorithm \ref{Diversify-Final-Result-Set}.
	
	 Through the procedure of this diversification framework, we can make sure which element from input data will be replaced into memory and used to increase the diversity of the final result set in memory.
	
\begin{algorithm}
\caption{Diversify-Final-Result-Set($k$,$n$)}\label{Diversify-Final-Result-Set}
\begin{algorithmic}[1]
\State \textbf{Input:} $A_0=\{a_1,a_2,\cdots,a_m\}$, $u_1,u_2,\cdots,u_n$
\State \textbf{Output:} final PDG value
\State $PDG_{\max} \leftarrow -{\infty}$
	\For {$i=1 \to k$}
   		\State $PDG(u_i)$ = Calculate-Element-PDG($u_i$)
   			\If {$PDG(u_i) > PDG_{\max}$}
   			\State $PDG_{\max} \leftarrow PDG(u_i)$
   			\EndIf
   \EndFor
   \For {$i=k+1 \to n$}
   		\State $PDG(u_i)$ = Calculate-Element-PDG($u_i$)
   			\If {$PDG(u_i) > PDG_{\max}$}
   			\State replace $\argmax\limits_{{u_i}' \in A_0} Div(A_0 \setminus \{{u_i}'\} \cup \{u_i\})$ with $u_i$
   			\State \textbf{return} $PDG(u_i)$
   			\EndIf
   \EndFor
\State replace $\argmax\limits_{{u_i}' \in A_0} Div(A_0 \setminus \{{u_i}'\} \cup \{u_n\})$ with $u_n$
\State \textbf{return} $PDG(u_n)$	
\end{algorithmic}
\end{algorithm}


	\textbf{Complexity Analysis:} Here we analyse the time complexity of the proposed diversification algorithm. Set the size of input as $n$ and suppose that the time complexity of the procedure  Calculate-Element-PDG($u_i$), which is invoked in Line 5 and Line 11 of Algorithm \ref{Diversify-Final-Result-Set}, is $h(m)$. This is a function of $m$, since the PDG value can be considered as an attribute of an input element and thus, is only related to the elements in memory.
	
	Firstly, during the scanning of first $k$ elements, we calculate each element's PDG value in order to initialize the $PDG_{\max}$ value. Therefore, it costs $k \cdot h(m)$ time. Then in the following scanning procedure, in Line 10 to Line 16 of Algorithm \ref{Diversify-Final-Result-Set}, the number of elements to be considered is no more than $n-k$. Therefore, the time complexity of this part is  O($(n-k) \cdot h(m)$). In a nutshell, the time complexity of this proposed diversification framework is O($n \cdot h(m)$). As the input size is $n$, and the size of available memory $m$ is a constant far smaller than $n$ and can be considered irrelevant to the time complexity, our proposed framework's time complexity is $O(n)$.	

	Even though this algorithm is in linear time, it indicates the worst case. Mostly we do not need to scan every element in n to increase the diversity. First, the elements in the memory will be scanned for $k$ times to get the $PDG\_max$. After that, we only need to find an element in the following elements which can increase the diversity with a replacement, and the rest of the following elements would not be scanned after the replacement. Mostly, the elements in the memory will be scanned for only a few times much smaller than n. The procedure of checking whether a certain element will be replaced into memory is performed online. Also, if the value of $k$ is chosen properly, we can guarantee the probability to successfully select the element with the maximal PDG value. This will be discussed more clearly in Section~\ref{subsection: probability of success}.

Additionally, this is a sub-linear space algorithm since the space spent is the input size's low-order function, and this algorithm can improve the diversity of elements stored in limited available memory. In this way, users can obtain more information in a fixed time period and thus, users' information needs can be satisfied more efficiently. As a result, the users' satisfaction is improved.

\subsection{The Probability of Success} \label{subsection: probability of success}
	In this section, we provide mathematical foundation of our proposed diversification framework. From the analysis, our framework is established to increase the overall diversity of the final result set by replacing an element with a relatively large PDG value from input into memory. In this sense, with the restriction that we only choose one element, if the element with the maximal PDG value is selected, we can increase the diversity to the utmost. Next, we will study the probability of this event's success.
	
	As stated in Section~\ref{subsection: basic definitions}, the proposed PDG value of a given element can be considered as an attribute of this element and thus, has nothing to do with the other input elements. In this sense, the PDG value here can be analogical to the score described in the on-line hiring problem~\cite{Introduction_to_algorithms}.
	
	Therefore, our mathematical foundation here uses the similar idea of the proof in \cite{Introduction_to_algorithms}. During the scanning of input data described in the proposed diversification framework, we hope to select the element with the maximal PDG value and replace it into memory. We denote this event as $U$, and the probability of this event's success as $Pr(U)$. Our task is to calculate $Pr(U)$ or obtain the range of $Pr(U)$.
	
	In the process of our derivation to obtain the mathematical foundation for computing $Pr(U)$, we first define several events which play an important role in this problem. Next, through derivation, we try to change the computational formula of $Pr(U)$ step by step. Then by simplifying the formula and solving some definite integrations, we finally get the range of $Pr(U)$.
	
	Firstly, we denote $U_i$ as the event that when the element with maximal PDG value is the $i$th element, we successfully select it and put in into memory. Then we obtain Theorem~\ref{thm:Pr(U) original} shown below.

\begin{theorem} \label{thm:Pr(U) original}
$Pr(U)=\sum \limits_{i=k+1} \limits^{n} Pr(U_i)$
\end{theorem}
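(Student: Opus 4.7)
The plan is to express $U$ as a disjoint union of the events $U_i$ and then show that the terms with $i \le k$ contribute nothing, leaving only the sum from $i = k+1$ to $n$. So the proof is essentially an application of the law of total probability, partitioned by the position of the element whose PDG value is maximal.

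First I would observe that, assuming a unique maximum (which can be arranged by a tie-breaking rule or argued to hold with probability $1$ when PDG values are distinct), the events $U_1, U_2, \dots, U_n$ are pairwise disjoint: the clause ``the element with maximal PDG value is the $i$th element'' can be true for only one index $i$. Moreover, $U$ occurs if and only if exactly one of the $U_i$ occurs, because any successful selection of the max-PDG element must correspond to some position at which that element lies. Hence, by additivity of probability,
\begin{displaymath}
Pr(U) = \sum_{i=1}^{n} Pr(U_i).
\end{displaymath}

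Next I would argue that $Pr(U_i) = 0$ for every $i \in \{1,2,\dots,k\}$. By inspection of Algorithm~\ref{Diversify-Final-Result-Set}, during the first loop (Lines~4--9) no replacement into $A_0$ ever takes place; the first $k$ elements are used solely to initialize $PDG_{\max}$. Therefore, if the maximal-PDG element $u_i$ occurs at a position $i \le k$, it is never inserted into memory by the algorithm: the only chance for a replacement is in the second loop, whose iterations begin at index $k+1$. Hence the joint event ``max is at position $i \le k$'' and ``we successfully select it and put it into memory'' is impossible, giving $Pr(U_i) = 0$.

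Combining the two observations yields
\begin{displaymath}
Pr(U) = \sum_{i=1}^{k} Pr(U_i) + \sum_{i=k+1}^{n} Pr(U_i) = \sum_{i=k+1}^{n} Pr(U_i),
\end{displaymath}
which is the stated identity. The main subtlety, rather than any hard calculation, is the informal description of $U_i$: I would want to make explicit that $U_i$ requires both the location condition on the maximum and the selection condition, since it is the conjunction that makes the $U_i$ mutually exclusive and makes the vanishing argument for $i \le k$ a direct consequence of the algorithm's control flow. Everything else is routine bookkeeping; no estimates or integrals are needed at this stage (those come later when each $Pr(U_i)$ is bounded individually).
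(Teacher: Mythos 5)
Your proposal is correct and follows essentially the same route as the paper's proof: decompose $U$ into the pairwise disjoint events $U_i$, then observe that $Pr(U_i)=0$ for $i\le k$ because the algorithm never performs a replacement during the first $k$ positions. The extra care you take about uniqueness of the maximum and the conjunctive definition of $U_i$ is a reasonable tightening of the paper's more informal argument, but it does not change the method.
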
	

\begin{proof}
For various values of $i$, $U_i$ is non-intersect. Thus we can get $Pr(U)=\sum _{i=1} ^{n} Pr(U_i)$. According to the description of our algorithms, we will fail to select the best element, i.e., the element with the maximal PDG value, if this element appears in the first $k$ positions from input. Thus, $Pr(U_i)=0$, where $i=1,2, \cdots, k$. In this way, we get Theorem~\ref{thm:Pr(U) original}.
\end{proof}

	Now we aim to calculate $Pr(U_i)$. Assume the input data is denoted as $u_1,u_2, \cdots, u_n$. Set $M(u_j)=\max \limits_{1 \leq i \leq j} \{PDG(u_i)\}$ as the highest PDG value of the element which is among $u_1,u_2, \cdots, u_j$. If the event $U_i$ is to take place, two other events have to happen simultaneously. One is that the element with the highest PDG value must be in the position of the element $u_i$, and we define this event as $R_i$. The other is that all the PDG values of elements among $u_{k+1},u_{k+2}, \cdots, u_{i-1}$ must be smaller than $M(u_k)$. This event is denoted as $T_i$. Then we can obtain Theorem~\ref{thm:Pr(Ui)} demonstrated below.

\begin{theorem}\label{thm:Pr(Ui)}
$Pr(U_i)=Pr(R_i \cap T_i)=Pr(R_i) \cdot Pr(T_i)$
\end{theorem}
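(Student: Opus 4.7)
The plan is to prove the two equalities in turn. The first equality $Pr(U_i)=Pr(R_i\cap T_i)$ is really a set-theoretic identity about which outcomes cause the algorithm to return position $i$, while the second equality $Pr(R_i\cap T_i)=Pr(R_i)\cdot Pr(T_i)$ is the probabilistic content and relies on the (implicit) assumption that the input stream $u_1,\dots,u_n$ arrives in a uniformly random order, in analogy with the online hiring problem cited from \cite{Introduction_to_algorithms}.

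For the first equality, I would trace Algorithm~\ref{Diversify-Final-Result-Set}. The algorithm performs a replacement at position $i>k$ precisely when $PDG(u_i)>PDG_{\max}=M(u_k)$ and no earlier index $j\in\{k+1,\dots,i-1\}$ has already triggered a replacement. To have $U_i$ occur, the element that is actually the global PDG-maximum must be at position $i$ (this is $R_i$), and no earlier replacement may have happened, which by the algorithm's logic means $PDG(u_j)\le M(u_k)$ for all $j\in\{k+1,\dots,i-1\}$ (this is $T_i$). Conversely, if both $R_i$ and $T_i$ hold, then the algorithm scans past the first $k$ without ever exceeding $M(u_k)$ until it reaches position $i$, at which point $PDG(u_i)$ is the overall maximum (hence strictly greater than $M(u_k)$, assuming distinct PDG values), so the replacement is performed and $U_i$ occurs. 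This gives $U_i=R_i\cap T_i$.

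For the second equality, I would argue independence using the symmetry of a uniformly random permutation. Note that $T_i$ is equivalent to the statement that the largest PDG value among $u_1,\dots,u_{i-1}$ lies in one of the first $k$ positions. Under the random-order assumption, conditioning on $R_i$ (the overall maximum is at position $i$) leaves the remaining $n-1$ elements in a uniformly random order over the remaining positions; in particular, the maximum among the first $i-1$ positions is equally likely to sit in any of those $i-1$ slots. Hence $Pr(T_i\mid R_i)=k/(i-1)$, while unconditionally $Pr(T_i)=k/(i-1)$ by the same symmetry applied to $u_1,\dots,u_{i-1}$. Thus $Pr(T_i\mid R_i)=Pr(T_i)$, which is exactly independence.

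The main obstacle is conceptual rather than computational: one must be explicit that the whole analysis presumes the input order is a uniformly random permutation of the PDG values (otherwise, $PDG(u_i)$ is a deterministic sequence and no probability statement is meaningful). A minor technical wrinkle is the treatment of ties in PDG values; this can be handled either by assuming distinct values (as in the standard hiring-problem treatment) or by breaking ties according to an independent random rule, which does not affect the probabilities. Once the random-ordering assumption is stated, the combinatorial identities $Pr(R_i)=1/n$ and $Pr(T_i)=k/(i-1)$ follow immediately, and combined with the independence above they yield the theorem.
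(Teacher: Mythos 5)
Your proposal is correct and follows essentially the same route as the paper: decompose $U_i$ as $R_i \cap T_i$ by tracing the algorithm, then invoke independence of $R_i$ and $T_i$. The one place you go beyond the paper is worth noting: the paper merely asserts that $R_i$ and $T_i$ are independent, whereas you actually justify it by showing $Pr(T_i \mid R_i) = k/(i-1) = Pr(T_i)$ under the uniformly-random-order assumption (which you also rightly make explicit, along with the distinct-PDG-values caveat); this fills a real gap in the paper's own argument rather than departing from it.
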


\begin{proof}
The reason why the event $R_i$ has to happen to make sure that $U_i$ is to take place is obvious. Then, only if $T_i$ happens, the proposed algorithm will not select the element from the ($k+1$)th element to the ($i-1$)th element. $R_i$ and $T_i$ are two independent events, so we obtain the expression of $Pr(U_i)$ according to the statistics property of independent events.
\end{proof}

	Then the range of the probability of success $Pr(U)$ can be obtained and illustrated in Theorem~\ref{thm:probability_of_success}.

\begin{theorem}\label{thm:probability_of_success}
$\dfrac{k}{n}(\ln n - \ln k) \leq Pr(U) \leq \dfrac{k}{n}(\ln (n-1)-\ln (k-1))$
\end{theorem}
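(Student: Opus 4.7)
The plan is to reduce the bound to a harmonic-style sum via Theorems~\ref{thm:Pr(U) original} and \ref{thm:Pr(Ui)}, and then sandwich that sum by two elementary integrals of $1/x$.

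First, I would evaluate $Pr(R_i)$ and $Pr(T_i)$ explicitly. Since the PDG values are treated (as in the online hiring analogy) as an attribute attached to each element with the element ordering uniformly random, the element carrying the maximum PDG is equally likely to sit in any of the $n$ positions, giving $Pr(R_i)=1/n$. For $T_i$, the event asks that none of the elements $u_{k+1},\ldots,u_{i-1}$ exceeds $M(u_k)$, i.e.\ the maximum PDG among $u_1,\ldots,u_{i-1}$ lies in one of the first $k$ positions. By the same uniform-order symmetry this probability equals $k/(i-1)$. Combining these through Theorem~\ref{thm:Pr(Ui)} gives $Pr(U_i)=k/(n(i-1))$, and substituting into Theorem~\ref{thm:Pr(U) original} yields
\begin{equation*}
Pr(U)=\frac{k}{n}\sum_{i=k+1}^{n}\frac{1}{i-1}=\frac{k}{n}\sum_{j=k}^{n-1}\frac{1}{j}.
\end{equation*}

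Next, I would bound this harmonic tail by integral comparison. Since $1/x$ is decreasing, on each interval $[j,j+1]$ we have $\int_{j}^{j+1}\frac{dx}{x}\le \frac{1}{j}\le \int_{j-1}^{j}\frac{dx}{x}$. Summing $j=k,\ldots,n-1$ gives
\begin{equation*}
\int_{k}^{n}\frac{dx}{x}\;\le\;\sum_{j=k}^{n-1}\frac{1}{j}\;\le\;\int_{k-1}^{n-1}\frac{dx}{x},
\end{equation*}
which evaluates to $\ln n-\ln k$ on the left and $\ln(n-1)-\ln(k-1)$ on the right. Multiplying through by $k/n$ delivers both inequalities in Theorem~\ref{thm:probability_of_success}.

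The only delicate step, and the one I would verify most carefully, is the independence claim $Pr(R_i\cap T_i)=Pr(R_i)\cdot Pr(T_i)$ used implicitly when asserting $Pr(T_i)=k/(i-1)$ unconditionally. The clean way to justify it is to condition on the location of the overall maximum being $i$: given $R_i$, the remaining $n-1$ PDG values are still in uniformly random order among positions $1,\ldots,i-1,i+1,\ldots,n$, so the position of the largest among $u_1,\ldots,u_{i-1}$ is uniform on $\{1,\ldots,i-1\}$ and hits the first $k$ slots with probability $k/(i-1)$. Everything else is a mechanical integral computation, so I do not anticipate further difficulty.
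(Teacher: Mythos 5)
Your proposal is correct and follows essentially the same route as the paper's proof: computing $Pr(R_i)=1/n$ and $Pr(T_i)=k/(i-1)$, combining them via Theorem~\ref{thm:Pr(Ui)} and Theorem~\ref{thm:Pr(U) original} to obtain $Pr(U)=\frac{k}{n}\sum_{j=k}^{n-1}\frac{1}{j}$, and then sandwiching this sum between $\int_k^n \frac{dx}{x}$ and $\int_{k-1}^{n-1}\frac{dx}{x}$. Your extra remark justifying $Pr(T_i)=k/(i-1)$ by conditioning on the location of the overall maximum is a welcome refinement of a step the paper treats more casually, but it does not change the argument.
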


\begin{proof}
	We can easily figure out that $Pr(R_i)=1/n$ as the maximal PDG value is equally likely to appear in $n$ positions of input. Then $Pr(T_i)=k/(i-1)$, because if $T_i$ happens, it means that the highest PDG value among $u_1,u_2, \cdots,u_{i-1}$ must appear in the first $k$ positions of input and also, this value is equally likely to appear in the $k$ positions. Therefore, $Pr(U_i)=k/(n(i-1))$. Then we can finally get the computational formula of $Pr(U)$:

\begin{displaymath}
Pr(U)=\sum \limits_{i=k+1} \limits^{n} Pr(U_i)=\sum \limits_{i=k+1} \limits^{n} \dfrac{k}{n(i-1)}
\end{displaymath}

After simplification, the formula above is expressed as:

\begin{displaymath}
Pr(U)=\dfrac{k}{n}\sum \limits_{i=k+1} \limits^{n} \dfrac{1}{i-1}=\dfrac{k}{n}\sum \limits_{i=k} \limits^{n-1} \dfrac{1}{i}
\end{displaymath}

	Then we can use integration to make constraints on the upper bound and lower bound of $\sum_{i=k} ^{n-1} (1/i)$ as follows:
\begin{displaymath}
\int_k^n \dfrac{1}{x}dx \leq \sum \limits_{i=k} \limits^{n-1} \dfrac{1}{i} \leq \int_{k-1}^{n-1} \dfrac{1}{x}dx
\end{displaymath}
After solving the definite integrations above, we get the final upper bound and lower bound of $Pr(U)$ demonstrated in Theorem~\ref{thm:probability_of_success}.
\end{proof}

	In summary, the probability of successfully choosing the element with maximal PDG value in one experiment $Pr(U)$ will be no less than $(k/n) \cdot (\ln n - \ln k)$ and no more than $(k/n) \cdot (\ln (n-1)-\ln (k-1))$. Both upper bound and lower bound are related to the value of $k$ and $n$. In a certain experiment, the value of $n$ is set. Then if $k$ is chosen properly, we can guarantee the probability of success in one single experiment is satisfactory.

\subsection{Discussion} \label{subsection: discussion}
	As described before, we know how the proposed framework works and why this framework has a guarantee on probability of success. In the implementation of this diversification framework, due to the large amount of big data, we have to pick out representative samples from the input data instead of accessing the whole data set. In this section, we will discuss how to select a proper number of samples.
	
	As the total amount of input data can be large, sampling is essential in experiments. Hence, we divide the input into a certain number of segments with the same size. These segments can be considered as the samples of the whole big data set. Concretely, we try to divide the data file into segments with a fixed size of $a$, then the whole data file is partitioned into $\lfloor n/a \rfloor$ segments. We select $s$ segments from them and use them as experimental samples. As we carry out such sampling experiments and obtain corresponding results based on the samples selected, it is apparent that variables $s$ and $a$ will exert an influence on the performance.
	
	Now suppose we select $n$ samples to represent the whole data set, and we set $X_i$ to represent the result of the $i$th sample. $X_i=1$ represents that we manage to find the element with the maximal PDG value and replace it into memory in order to increase the diversity. Then, $X_i=0$ means that we fail to do so. It is obvious that $X_1, X_2, \cdots, X_n$, the $n$ samples, can be treated as a sequence of independent Poisson tests. $X_1,X_2, \cdots, X_n$ all satisfy that $Pr(X_i=1)=P_i$, where $i=1,2, \cdots, n$. Then we suppose that $X=\sum \limits_{i=1} \limits^n X_i$ and then $\mu = E(X)$, using $\mu$ to denote the mathematical expectation of $X$. As a result of the properties of the expectation and the summation, we obtain the following equation.

\begin{displaymath}
\mu = E(X)=E[\sum \limits_{i=1} \limits^n X_i]=\sum \limits_{i=1} \limits^n E[X_i]=\sum \limits_{i=1} \limits^n P_i
\end{displaymath}

	Now we consider how to measure the effectiveness and reliability of our proposed diversification framework when using samples to represent the whole input. From \cite{Probability_and_computing:_Randomized_algorithms_and_probabilistic_analysis}, we have the following theorem.

\begin{theorem}\label{thm:probability_of_success_sample}
If $X_1,X_2, \cdots, X_n$ are independent Poisson tests and they satisfy that $Pr(X_i)=P_i$, set $X=\sum \limits_{i=1} \limits^n X_i$, $\mu=E[X]$, then as for $\forall 0< \delta <1$, the inequality below holds.
\begin{displaymath}
Pr(|X-\mu| \geq \delta \mu) \leq 2e^{-\mu \delta ^2 /3}
\end{displaymath}
\end{theorem}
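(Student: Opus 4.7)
The plan is to prove this as a standard two-sided Chernoff bound, using the moment generating function technique together with Markov's inequality, and then collapsing the two one-sided bounds into the stated form.

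First I would split the target probability into an upper-tail piece and a lower-tail piece,
\begin{displaymath}
Pr(|X-\mu|\geq \delta\mu) = Pr(X\geq (1+\delta)\mu) + Pr(X\leq (1-\delta)\mu),
\end{displaymath}
and handle each separately. For the upper tail, I would introduce a parameter $t>0$ and apply Markov's inequality to $e^{tX}$, writing $Pr(X\geq (1+\delta)\mu) \leq E[e^{tX}]/e^{t(1+\delta)\mu}$. Using independence of the $X_i$ and the estimate $1+x\leq e^x$ on each factor $E[e^{tX_i}] = 1+P_i(e^t-1)$, I would bound $E[e^{tX}] \leq \exp\bigl(\mu(e^t-1)\bigr)$. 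Choosing $t=\ln(1+\delta)$ minimizes the resulting expression and yields the classical form
\begin{displaymath}
Pr(X\geq (1+\delta)\mu) \leq \left(\frac{e^{\delta}}{(1+\delta)^{1+\delta}}\right)^{\mu}.
\end{displaymath}

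The next step, and the part I expect to be the main technical obstacle, is to show that the bracketed quantity is at most $e^{-\delta^2/3}$ on the range $0<\delta<1$. I would do this by taking logarithms: set $f(\delta) = \delta - (1+\delta)\ln(1+\delta) + \delta^2/3$ and verify that $f(0)=0$ and $f'(\delta)\leq 0$ for $\delta\in(0,1)$, which follows from a short Taylor expansion of $\ln(1+\delta)$. That gives the clean upper tail $Pr(X\geq(1+\delta)\mu)\leq e^{-\mu\delta^2/3}$. For the lower tail I would repeat the Markov-on-MGF argument with $t<0$ (equivalently, bound $Pr(e^{-tX}\geq e^{-t(1-\delta)\mu})$), optimize to get $\bigl(e^{-\delta}/(1-\delta)^{1-\delta}\bigr)^{\mu}$, and show by an analogous log-and-Taylor argument that this is bounded above by $e^{-\mu\delta^2/2}$, which is itself $\leq e^{-\mu\delta^2/3}$.

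Finally I would just add the two one-sided bounds, obtaining the factor of $2$ that appears on the right-hand side and concluding $Pr(|X-\mu|\geq \delta\mu)\leq 2e^{-\mu\delta^2/3}$. Aside from the two elementary inequalities $\frac{e^{\delta}}{(1+\delta)^{1+\delta}}\leq e^{-\delta^2/3}$ and $\frac{e^{-\delta}}{(1-\delta)^{1-\delta}}\leq e^{-\delta^2/2}$, every step is bookkeeping on the MGF, so the only real work is in those two calculus lemmas. Since the theorem is quoted directly from Mitzenmacher--Upfal, I would likely also note that the detailed inequality manipulations can be cited rather than reproduced in full.
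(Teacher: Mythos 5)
Your proof is correct and is precisely the standard Chernoff-bound argument from Mitzenmacher--Upfal, which is all the paper relies on here: it states the theorem and cites that reference without giving any proof of its own. The only minor imprecision is that the upper-tail lemma $e^{\delta}/(1+\delta)^{1+\delta}\leq e^{-\delta^2/3}$ is not quite a ``short Taylor expansion'' --- for $f(\delta)=\delta-(1+\delta)\ln(1+\delta)+\delta^2/3$ the second derivative changes sign at $\delta=1/2$, so one must also check $f'(0)=0$ and $f'(1)<0$ to conclude $f'\leq 0$ on $(0,1)$ --- but this is exactly the calculation in the cited source and does not affect the validity of your outline.
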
	

	According to Theorem~\ref{thm:probability_of_success_sample}, we know that the effectiveness and reliability of our sampling method is closely related to a function of $\mu$. Thus we define $f(\mu)=2e^{-\mu \delta ^2 /3}$ to describe it. Then let us provide a more detailed discussion on $f(\mu)$. The 1st derivative of $f(\mu)$ can be easily calculated as $f'(\mu)=2e^{-\mu \delta ^2 /3} \cdot (- \dfrac{\delta ^2}{3})$. Clearly, $f'(\mu) \leq 0$. It means that $f(\mu)$ decreases gradually with $\mu$ increasing. Thus, suppose the maximal value of $\mu$ is $\mu _{max}$ and the minimal value of $\mu$ is $\mu _{min}$. Then as a result of $f(\mu)$'s properties, we make sure that $f(\mu _{max}) \leq f(\mu) \leq f(\mu _{min})$. From the inequality presented in Theorem~\ref{thm:probability_of_success_sample}, we can get
\begin{displaymath}
	Pr(|X-\mu| \geq \delta \mu) \leq f(\mu),
\end{displaymath}
so we can get
\begin{displaymath}
Pr(|X-\mu| \geq \delta \mu) \leq f(\mu _{min}).
\end{displaymath}

	Now our task is to compute $f(\mu _{min})$. As $\mu =\sum \limits_{i=1} \limits^n P_i$, the minimal value of $\mu$ is presented as $\mu _{min}=n \cdot \min \limits_{1 \leq i \leq n} P_i$. Back in the inequality in Theorem~\ref{thm:probability_of_success} in Section~\ref{subsection: probability of success}, we get the following formula.
\begin{displaymath}
\dfrac{k}{n}(\ln n - \ln k) \leq Pr(U) \leq \dfrac{k}{n}(\ln (n-1)-\ln (k-1)).
\end{displaymath}
Here $Pr(U)=\dfrac{k}{n}\sum \limits_{i=k} \limits^{n-1} \dfrac{1}{i}$ represents the probability that one single experiment is successful. In this sense, $Pr(U)$ is equivalent to $P_i$. Thus, it is true that
\begin{displaymath}
\dfrac{k}{n}(\ln n - \ln k) \leq P_i \leq \dfrac{k}{n}(\ln (n-1)-\ln (k-1)).
\end{displaymath}

	If we try to calculate $\mu_{min}$, we have to compute the minimal value of $P_i$. Set $h(k)=\dfrac{k}{n}(\ln n - \ln k)$, as $P_i \geq h(k)$, then it is apparent that $P_i \geq h(k)_{max}$, and the minimal value of ${P_i}$ is $h(k)_{max}$. After calculation and analysis, when $k=\dfrac{n}{e}$, the value of $h(k)$ is the largest and $h(k)_{max}=\dfrac{1}{e}$.
	
	Therefore, $\mu _{min}=n \cdot \dfrac{1}{e}=\dfrac{n}{e}$. Then we can get the following theorem.
\begin{theorem} \label{thm:finalinequality}
$Pr(|X-\mu| \geq \delta \mu) \leq 2e^{- \dfrac{\delta ^2 \cdot n}{3e}}$
\end{theorem}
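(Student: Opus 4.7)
The plan is to combine the Chernoff-type bound from Theorem \ref{thm:probability_of_success_sample} with the worst-case estimate of $\mu$ obtained from the probability-of-success bounds in Theorem \ref{thm:probability_of_success}. Since the right-hand side $f(\mu)=2e^{-\mu\delta^2/3}$ is a decreasing function of $\mu$ (as already observed via $f'(\mu)\le 0$), replacing $\mu$ by any valid lower bound $\mu_{min}\le\mu$ only weakens the inequality, so it suffices to compute $\mu_{min}$ and substitute.

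First, I would recall that $\mu=\sum_{i=1}^{n}P_i$, where each $P_i$ equals the single-experiment success probability $Pr(U)$ derived earlier. Therefore $\mu\ge n\cdot\min_i P_i$, and the paper has already produced the lower bound $P_i\ge h(k):=\frac{k}{n}(\ln n-\ln k)$. To get the sharpest guarantee uniform in $k$, I would maximize $h(k)$ over the admissible range of $k$, which is the step where a small optimization is needed. Differentiating gives $h'(k)=\frac{1}{n}(\ln n-\ln k-1)$, which vanishes at $k=n/e$ and is positive (resp.\ negative) to its left (resp.\ right), so $h$ attains its maximum at $k=n/e$ with value $h(n/e)=\frac{1}{e}$. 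Hence the best uniform lower bound is $P_i\ge 1/e$, yielding $\mu_{min}=n/e$.

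Finally, I would substitute $\mu_{min}=n/e$ into $f(\mu_{min})$ and invoke the monotonicity chain
\[
Pr(|X-\mu|\ge\delta\mu)\;\le\;f(\mu)\;\le\;f(\mu_{min})\;=\;2e^{-\delta^2 n/(3e)},
\]
which is exactly the conclusion of Theorem \ref{thm:finalinequality}.

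I do not expect any serious obstacle here: the Chernoff-type bound is already granted, and the monotonicity of $f$ and the shape of $h(k)$ are both elementary calculus. The only subtle point worth flagging is conceptual rather than technical, namely the legitimacy of choosing $k=n/e$ to define $\mu_{min}$: this makes sense because the framework designer is free to tune $k$, and the resulting $1/e$ floor on $P_i$ is exactly the classical secretary-problem constant inherited from the online-hiring analogy cited earlier. If one wanted a bound valid for an arbitrary prescribed $k$, one would instead leave $h(k)$ in place and obtain $Pr(|X-\mu|\ge\delta\mu)\le 2\exp(-\delta^2 k(\ln n-\ln k)/3)$, of which the stated theorem is the optimized special case.
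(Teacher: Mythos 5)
Your proposal follows essentially the same route as the paper: invoke the Chernoff-type bound of Theorem~\ref{thm:probability_of_success_sample}, use the monotonic decrease of $f(\mu)=2e^{-\mu\delta^2/3}$ to reduce to $f(\mu_{min})$, bound $P_i$ below via $h(k)=\frac{k}{n}(\ln n-\ln k)$ from Theorem~\ref{thm:probability_of_success}, maximize at $k=n/e$ to get $\mu_{min}=n/e$, and substitute. Your explicit differentiation of $h(k)$ and your remark that the bound is only legitimate because $k$ is a tunable design parameter are welcome clarifications of steps the paper leaves implicit, but they do not change the argument.
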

	
	This inequality measures the degree that the results of samples deviate from the mathematical expectation of the population. Theorem~\ref{thm:finalinequality} describes the probability that the deviation of this sampling result from the expectation of the population is no less than $\delta$ times of $\mu$. We denote its right section as $p_0$. Therefore, we can get $Pr(|X-\mu| \geq \delta \mu) \leq p_0$.

	 As $\delta$ is an indicator of the deviation degree of sampling results from the expectation of the population, different values of $\delta$ have various mathematical meanings. $n$ describes the number of picked out samples. By choosing various values of $\delta$ and $n$, we can get different $p_0$. That is, diverse probability values.
	
	
	After testing the change of $p_0$ values with $\delta$ and $n$, two conclusions are drawn. The first is that with a given $n$, if $\delta$ increases, then the value of $p_0$ correspondingly decreases. The second conclusion is that when the value of $\delta$ stands constant, with $n$ increasing, the value of $p_0$ is on the decrease.
	
	Therefore, it is learned that given a certain $\delta$, which represents the deviation degree, we should correspondingly choose a proper value of $n$, i.e., the number of samples, in order to make sure that the value of $p_0$ is small enough. Only with this method, can we ensure the effectiveness and reliability of our diversification framework using samples to represent the whole population.
\section{Implementation Algorithms} \label{section: implementation}
	In Section 3, we thoroughly describe the diversification \linebreak[4]framework and then explain how it works. In this section, we develop implementation algorithms based on our proposed framework for two kinds of data types, numerical data and string data. Note that these two kinds covers most data types. For instance, both integer and double values can be processed as numerical data while both text and category attributes can be processed as string data.
\subsection{Algorithms for Numerical Data}
\subsubsection{Expression of Diversity and PDG}
	In this section, we assume the input data is in the form of numerical values. Then our task is to pick out a portion from the query results on a massive numerical data set and then improve the diversity of this portion.

	As discussed in Section 3, we assume that the final result set with regard to a user query is stored in memory. Then we denote the size of available memory as $m$, the number of elements in the whole data set as $n$, and our task is to increase the diversity of the elements in memory.
	
	As for several numerical values $X=\{x_1,x_2,\cdots,x_k\}$, we describe the diversity $Div(X)$ of $X$ as the variance, represented as $Div(X)=Var(X)=E[(X-\mu)^2]$, where $\mu$ is the average value of $X$. Thus, the diversity of $m$ different values $A=\{a_1,a_2,\cdots,a_m\}$ in memory can be computed as $Div(A)=Div(a_1,a_2,\cdots,a_m)=\dfrac{1}{m} \cdot \sum\limits_{i=1}\limits^m (a_i-\mu_a)^2$, where $\mu_a$ is the average of $a_1,a_2,\cdots,a_m$.

	Then, given an element $\varphi$ in the input, we describe its \textit{possible diversity gain} to the diversity of the available memory as $PDG(\varphi)$. In this case, $PDG(\varphi)$ is computed as follows.

\begin{displaymath}
PDG(\varphi)=\max \limits_{a_j \in A}\{Var(A \setminus\{a_j\}\cup \{\varphi\})-Var(A)\}
\end{displaymath}
	
	Variance measures how far the elements are spread out from the mean value and from each other. Therefore, if the variance value of several elements is large, then it means that these elements are pretty different from each other. It correspondingly means that the diversity of this element set is large. In this sense, variance is a proper indicator of diversity among numerical elements. Additionally, with our simple but useful definition of diversity, it can really decrease the computational overhead of our proposed method.
\subsubsection{Implementation Algorithm}
	As demonstrated in Algorithm~\ref{Diversify-Final-Result-Set} in Section 3, the algorithm invokes the function $PDG(u_i)$ = Calculate-Element-PDG($u_i$) during its execution. This algorithm specific to numerical data is shown in Algorithm~\ref{Calculate-Element-PDG-1}.
\begin{algorithm}
\caption{Calculate-Numerical-Element-PDG($\varphi$)}\label{Calculate-Element-PDG-1}
\begin{algorithmic}[1]
\State \textbf{Input:} values in main memory $A_0=\{a_1,a_2, \cdots, a_m\}$, double element $\varphi$
\State \textbf{Output:} PDG value of $\varphi$
\State calculate the variance of $A_0$ as $result$
   \For {$i=1 \to m$}
   		\State calculate the variance of $A'$ where $\varphi$ takes place of $a_i$ as $variance_i$
   		\If {$variance_i > result$}
   		\State $result \leftarrow variance_i$
   		\EndIf
   \EndFor
\State $PDG(\varphi) \leftarrow result$
\State \textbf{return} $PDG(\varphi)$
\end{algorithmic}
\end{algorithm}

Here we analyse the procedure of Algorithm~\ref{Calculate-Element-PDG-1} in detail. The elements in memory are described as a vector $A_0=\{a_1,a_2, \cdots, a_m\}$. As for an input element $\varphi$, our intention is to output $PDG(\varphi)$. As illustrated in Line 3, we first compute the variance of $A_0$ and assign this value to $result$. Then in the loop presented in Line 4 to Line 9, we take into account the ``new'' variance of the result set when $\varphi$ takes place of each element stored in memory. Within the loop, if this ``new'' variance is larger than $result$, then we update the value of $result$ with this variance value. When the loop is terminated, in Line 10, the value of $result$ is assigned to $PDG(\varphi)$ and thus, this algorithm halts.

	\textbf{Time Complexity Analysis:} As in implementation, we simplify the formula of variance $\dfrac{1}{m} \cdot \sum\limits_{i=1}\limits^m (a_i-\mu_a)^2$ to $\dfrac{1}{m} \cdot (\sum\limits_{i=1}\limits^ma_i^2-m\mu_a^2)$. Therefore, it takes only $\theta(1)$ time to calculate the new variance if we replace an input element with a certain one in memory. In Algorithm \ref{Calculate-Element-PDG-1}, we execute $m$ times of the calculation process described above, so the time complexity of Algorithm~\ref{Calculate-Element-PDG-1} is $\theta (m)$.

We then use an example below to illustrate the algorithm.

\begin{myexp}
Consider the following scenario. A user wants to select a house to purchase from a tremendous \linebreak[4]amount of information, but the concrete intention for the spot or the size is not clear. Then we can really do him a favour by providing him with a certain number of houses with diverse sizes and layouts, which of course, are in the form of numerical values. Let us just take this situation as an example.
	
	Suppose there are 25 numerical elements in the input, \linebreak[4]which successively are 711.56, 121.65, 7498.12, 2866.83, \linebreak[4]794.47, 7638.57, 9561.95, 6819.74, 8324.07, 2753.54, -272.60, \linebreak[4]3396.49, 3857.34, 5266.30, 2788.52, 4681.03, 6.34, 5494.43, -8914.71, 7603.40, 1428.25, 591.98, 3332.02, 9255.67, \linebreak[4]7133.70. Set the size of available memory as $m=5$, then the parameter $n$ which means the number of input elements described in the framework is correspondingly 20. We assume the number of elements to scan is $k=10$. Thus, for 25 numerical elements, 5 will be stored in memory and the following $n=20$ will be used as input data.
	
	From Algorithm~\ref{Calculate-Element-PDG-1}, firstly, we choose $m$ different elements into memory and they are 121.65, 711.56, 794.47, 2866.83, 7498.12 in order. Secondly, we try to scan the $n$ elements in order. We scan the next $k$ elements and find out the $PDG_{\max}$ value is from the element 9561.95, which is in the second position of $n$ elements. Then the scanning continues. When we get to the 14th element -8914.71, it is found that its PDG value is larger than the $PDG_{\max}$ value and it needs to replace the second element in memory. Thirdly, we carry out the replacement and improve the diversity of elements stored in memory.
	
	However, if we change the element -8914.71 in input data to 8914.71 and then also carry out the procedure of Algorithm~\ref{Calculate-Element-PDG-1}, we may find the implementation results are quite different. The $PDG_{\max}$ value stays the same, but when the scanning continues, no following elements have a larger PDG value than $PDG_{\max}$. Thus, we select the last element 7133.70 to replace into memory and it needs to take place of the 4th element in it.
	
	There are two possible execution results of Algorithm~\ref{Calculate-Element-PDG-1}. With the implementation of this algorithm, the consumer access enough measurements and thus, can have more options. In this sense, our work is quite meaningful.
\end{myexp}	
\subsection{Implementation on String Data}
\subsubsection{Expression of Diversity and PDG}
	To study how to describe the diversity of a string set, we first lay emphasis on how to represent the dissimilarity between two strings. Here, we use edit distance, which is the minimum operations from insertion, deletion or substitution to change one string to another, so as to measure the difference between two strings. We choose edit distance because it is often used to measure the difference between strings \cite{Efficient_Approximate_Search_on_String_Collections}. Then given two strings $x$, $y$, the dissimilarity between them is denoted as $Dis(x,y)=EditDis(x,y)$.
	
	Thus, we represent the diversity of a string set $S=\{s_1,s_2,\cdots,s_m\}$, whose cardinality is $m$, as follows:
\begin{displaymath}
Dis(S)=\sum \limits_{i=1} \limits^{m-1} \sum \limits_{j=i+1} \limits^{m} Dis(s_i,s_j)=\sum \limits_{i=1} \limits^{m-1} \sum \limits_{j=i+1} \limits^{m} EditDis(s_i,s_j)
\end{displaymath}
Note that $Dis(S)$ is equivalent to $Div(S)$ when dealing with string data.

	Then, we describe a string element's PDG quantitatively. Given a certain string element $\varphi$ in the input data, we define its \textit{possible diversity gain} as follows.
\begin{displaymath}
PDG(\varphi)=\max \limits_{s_j \in S}\{Dis(S \setminus \{s_j\} \cup \{\varphi\})-Dis(S)\}
\end{displaymath}

	Edit distance can well measure the difference between two strings. In this sense, we sum up the edit distances between all pairs of strings in a string set together, and then use the result to describe how various the string elements in this set are. Therefore, this sum result denoted as $Dis(S)$ can reasonably represent the diversity of this string set $S$.
\subsubsection{Implementation Algorithm}
	We have proposed the implementation of the algorithm named $PDG(u_i)$ = Calculate-Element-PDG($u_i$) specific to numerical data in Algorithm~\ref{Calculate-Element-PDG-1}. Here we focus on the implementation of our proposed diversification framework specific to string data. Its pseudo-code is demonstrated in Algorithm~\ref{Calculate-Element-PDG-2}.
\begin{algorithm}
\caption{Calculate-String-Element-PDG($\varphi$)}\label{Calculate-Element-PDG-2}
\begin{algorithmic}[1]
\State \textbf{Input:} available memory $A_0=\{a_1,a_2, \cdots, a_m\}$, string element $\varphi$
\State \textbf{Output:} PDG value of $\varphi$
\State calculate the edit distance sum of $A_0$ as $result$
   \For {$i=1 \to m$}
   		\State calculate the edit distance sum of $A'$ where $\varphi$ takes place of $a_i$ as $distance_i$
   		\If {$distance_i > result$}
   		\State $result \leftarrow distance_i$
   		\EndIf
   \EndFor
\State $PDG(\varphi) \leftarrow result$
\State \textbf{return} $PDG(\varphi)$
\end{algorithmic}
\end{algorithm}

	Now we explain the procedure of Algorithm~\ref{Calculate-Element-PDG-2} in detail. At first, strings $a_1,a_2, \cdots, a_m$ are stored in memory and form a vector denoted as $A_0$. The input of this algorithm is a string element $\varphi$ and the output is $PDG(\varphi)$. In Line 3, we firstly calculate the sum of edit distances between each two strings in $A_0$ and assign this value to $result$. Secondly, in the loop in Line 4 to Line 9, we consider the conditions where $\varphi$ takes place of each element in memory, and then compute the sum of edit distances, respectively. With this value, we then check whether it is larger than current $result$ value. If so, we assign this edit distance sum value to $result$. Finally, in Line 10, $PDG(\varphi)$ is set as $result$ and then the algorithm is terminated.

	\textbf{Time Complexity Analysis:} Suppose the length of \linebreak[4]strings considered is $a$. Then it takes $\theta (a^2)$ time to calculate the edit distance between two strings. In implementation, we employ a method that only takes into account the calculation of the edit distances between the strings which are influenced by the input string $\varphi$. Then as to one replacement involving $\varphi$, we execute $\dfrac{1}{2} m(m-1)$ operations of calculating the edit distance between two strings, so the time complexity of dealing with one element is $\theta (m \cdot a^2)$. Since we have to process $m$ replacements, the overall time complexity of Algorithm~\ref{Calculate-Element-PDG-2} is $\theta(m^2a^2)$. As $a$ is irrelevant to $m$ and is fixed when the input data file is determined, then the time complexity is $\theta(m^2)$.
We now take a practical situation as an example and explains the procedure of the implementation method more thoroughly.
	
\begin{myexp}
	Here we collect the input data from the website \url{http://www.amazon.com/}. We choose the category \linebreak[4]``Books'' and then focus on the book list called \textit{``100 Books to Read in a Lifetime: Readers' Picks''}. We randomly choose several books and use their names as input in this example.
	
	Suppose the input has 15 strings. That is $s_1$: ``A Brief History of Time'', $s_2$: ``Alice Munro: Selected Stories'', $s_3$: ``Bel Canto'', $s_4$: ``Charlie and the Chocolate Factory'', $s_5$: ``Daring Greatly: How the Courage to Be Vulnerable Transforms the Way We Live, Love, Parent, and Lead'', $s_6$: ``Great Expectations'', $s_7$: ``Harry Potter and the Sorcerer's Stone'', $s_8$: ``Invisible Man'', $s_9$: ``In Cold Blood'', $s_{10}$: ``Jimmy Corrigan: Smartest Kid on Earth'', $s_{11}$: ``Kitchen Confidential'', $s_{12}$: ``The Devil in the White City: Murder, Magic, and Madness at the Fair that Changed America'', $s_{13}$: ``Love in the Time of Cholera'', $s_{14}$: ``Man's Search for Meaning'', $s_{15}$: ``The Lion, the Witch and the Wardrobe''.
	
	Set $m=5$, $k=5$ and $n=10$. Firstly, Algorithm~\ref{Calculate-Element-PDG-2} scans the input data and store $m$ different elements in memory, and the string elements are $s_1$, $s_2$, $s_3$, $s_4$, and $s_5$ in order. Then we scan the next $k$ string elements and pick out the $PDG_{\max}$ value which is from $s_{10}$. Then the scanning continues. When we reach $s_{12}$, this element has a larger PDG value than $PDG_{\max}$ and it needs to take place of the second element, that is, $s_2$ in memory. Then the replacement is carried out and the overall diversity of available memory is increased.
	
	Nevertheless, if we change $s_{12}$ in input file to ``Life After Life'' and still execute the algorithm, we will find that $PDG_{\max}$ will not vary, but no following elements have a larger PDG than $PDG_{\max}$. Then the scanning continues and the last element, $s_{15}$, is met. The last element should replace $s_1$ in memory to obtain its PDG. Therefore, we switch $s_{15}$ with $s_1$ in memory and manage to diversify the results.
	
	There are two possible conditions of this method and they are both illustrated in the example above. Additionally, in this scenario, to choose a book from the given book list, our proposed method allows diversifying the returned book titles and thus, can improve users' satisfaction and experience.
\end{myexp}
\section{Experimental Evaluation} \label{section: experiment}
	In this section, we evaluate our proposed methods with extensive experiments.
	
	\textbf{Test Dataset: } We choose various data sets according to the features of two different data types: as for numerical data, we use double data extracted from TPCH \cite{TPCH_double_data}. The scale of double data is 150,000.
	
	With regard to string data, we carry out our experimental evaluation on DBLP dataset, which is available in \cite{DBLP_string_data} and the scale of string data utilized is 500,000. The titles of various papers are used as the input of experimental evaluation of string data.
\subsection{Experimental Methods}
	\textbf{Independent variables:} In our algorithms, three parameters affect the effectiveness and efficiency of our methods. The first one is the size of available memory $m$. The second parameter is the total number of input elements $n$ and the third parameter is the number of elements $k$ to be firstly scanned in the procedure.
	
	As discussed in Section~\ref{subsection: discussion}, the cardinality of selected samples $a$ and the number of selected samples $s$ both exert an influence on our methods' performance. Hence, these two parameters have to be taken into consideration as well.
	
	As discussed above, in the global experiments, we consider five parameters, $m$, $n$, $k$, $a$ and $s$. We study and analyse these their influence on the efficiency and effectiveness of proposed methods, respectively. These five parameters may affect each other to some extent, so we employ the control variate method to study the impact of each parameter on the proposed methods' performance.
	
	\textbf{Dependent variable:} In order to better measure the impact of these variables on returned results' diversity, we present a measurement to describe the degree that the diversity is increased. With the definition of \textit{possible diversity gain} (PDG) value proposed in section 3 and section 4, suppose the original diversity value computed is $div_0$, and the PDG value after diversification is denoted as $PDG$, we define \textit{diversity increasing rate}, i.e., \textit{DIR value} as $DIR=\dfrac{PDG}{div_0}$.
	
	In the next two subsections, we study and analyse the impact of five parameters, $m$, $n$, $k$, $a$ and $s$ on $DIR$ value and discuss about whether and how they affect the performance of our diversification methods.
	
	\textbf{About $\delta$:} When we try to study the influence of five control parameters on the diversity increasing rate, we base our experimental results on a fixed number of randomly selected samples. Note that we employ the simple random sampling scheme without replacement. In most of our presented figures (except for the experimental study on the impact of selected sample number $s$), three separate lines are illustrated with the legends named respectively as $\delta=0.2$, $\delta=0.5$, and $\delta=0.8$, respectively. $\delta$ describes the degree that sampling results deviate from population results.
	
	\textbf{About $p_0$:} Considering the meaning of the formula discussed in Section~\ref{subsection: discussion}, $Pr(|X-\mu| \geq \delta \mu) \leq p_0$, these three lines are drawn using various $\delta$ and different $s$ in order to void large $p_0$. As for $\delta=0.2$, we set $s=500$, and then we get $p_0=1.72151 \times 10^{-1}$, i.e., $Pr(|X-\mu| \geq \delta \mu) \leq 1.72151 \times 10^{-1}$. This means that the probability that, our sampling results deviate from the theoretical results of all data, to a degree of more than $\delta$ times of expectation value, is less that $1.72151 \times 10^{-1}$. As the value of $p_0$ is small enough, we can guarantee that the results of our sampling experiments are credible.
	
	Similarly, with regard to $\delta=0.5$ and $\delta=0.8$, we respectively set the value of $s$ as 300 and 100, and then the value of $p_0$ is correspondingly $2.02689 \times 10^{-4}$ and $7.80991 \times 10^{-4}$. As the values of $p_0$ are small enough, we ensure that the experimental results of these two scenarios are both trustable.
\subsection{Experiments on Numerical Data} \label{subsection: experiment double}
	In this section, we analyse the impact of the five parameters on the DIR value, when dealing with numerical data and aiming to diversify returned results.
\subsubsection{Impact of Memory Size}
\begin{figure}
\centering
\subfigure[On DIR]{
\label{double-m-dir}
\includegraphics[width=0.4\textwidth]{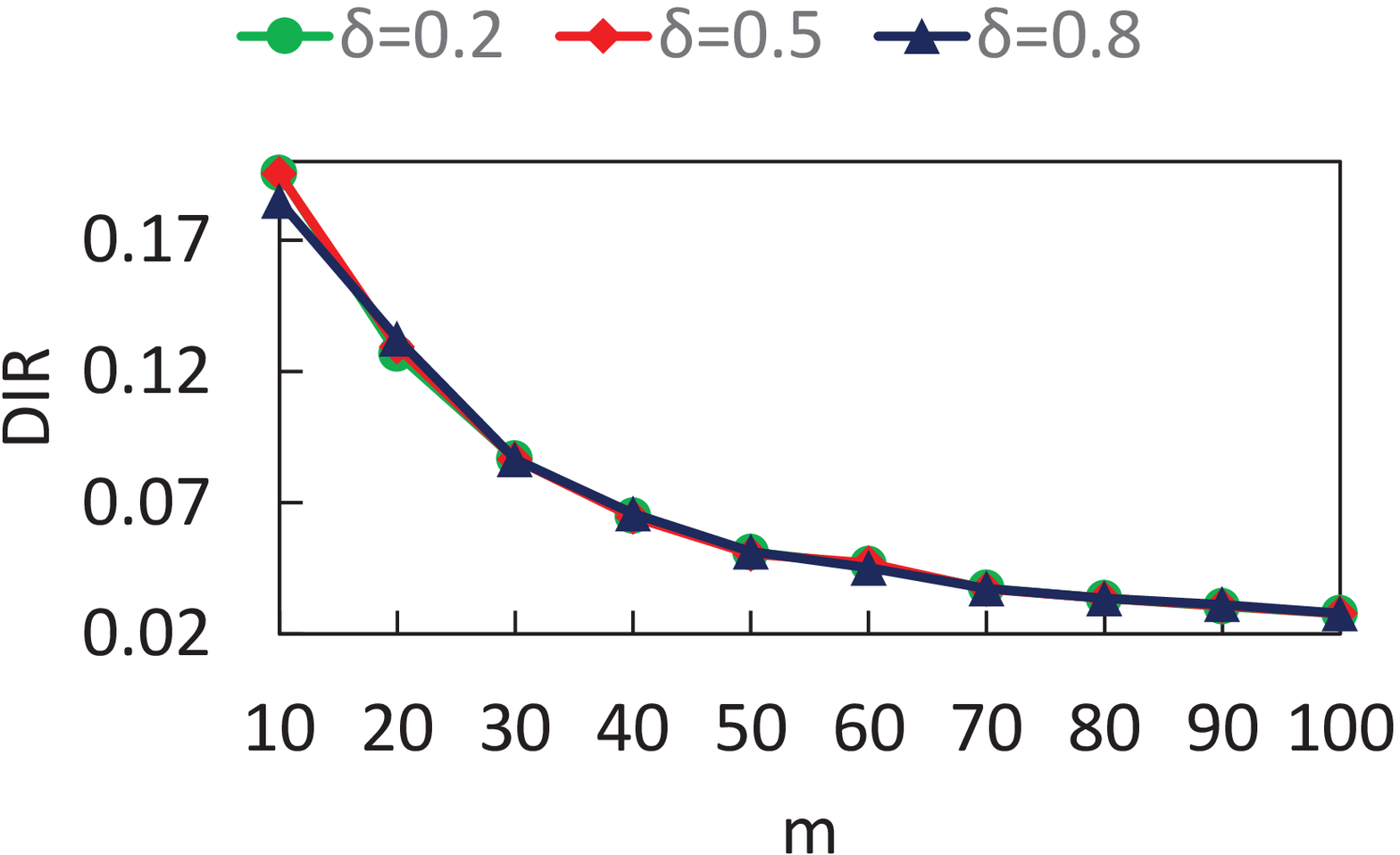}}
\quad
\subfigure[On Running Time]{
\label{double-m-rt}
\includegraphics[width=0.4\textwidth]{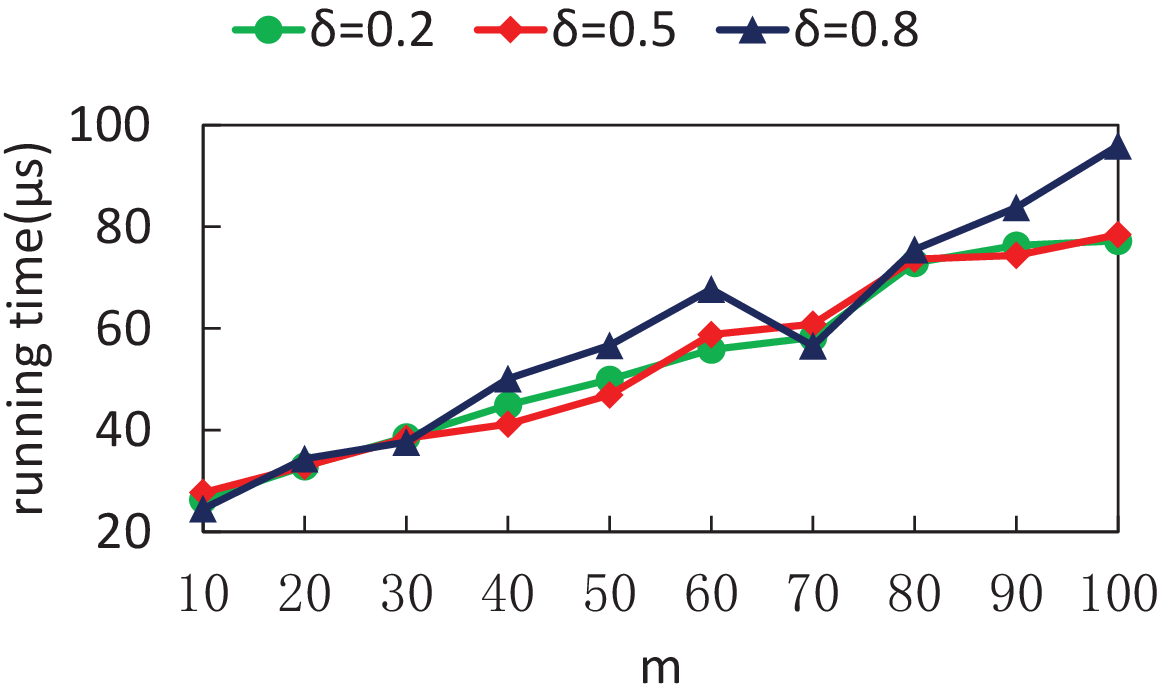}}
\caption{Impact of Memory Size $m$ (Numerical)}
\label{Figure 1}
\end{figure}	

	First, we consider the impact that $m$, i.e., the size of memory, exerts on the value of $DIR$. We present a scatter plot of DIR value vs. $m$'s value in Figure~\ref{double-m-dir}. As illustrated in the figure, it is obvious that with the increasing of $m$, DIR value tends to decline. It is all the same for various $\delta$ values. Well, the experimental results are consistent with the analysis. Because a larger $m$ means that the size of available memory gets larger, then only replacing one element into memory to increase the overall diversity, i.e., the variance of returned results stored in memory, will be harder. Therefore, the diversity increasing rate will decline significantly.
	
	Then as observed from Figure~\ref{double-m-rt}, which describes $m$'s influence on the running time of a single experiment, when $m$ becomes larger, the running time gets larger correspondingly, and it holds for various $\delta$ values. This is because with the increasing of $m$, to calculate the original diversity of the results stored in memory takes more time. Also, as for each input element $\varphi$, when considering replacing it into memory, it takes more time to compute the ``new'' diversity, i.e., variance, of the ``new'' result set. Therefore, the running time will also increase.
\subsubsection{Impact of Number of Elements To Be Scanned}
\begin{figure}
\centering
\subfigure[On DIR]{
\label{double-k-dir}
\includegraphics[width=0.4\textwidth]{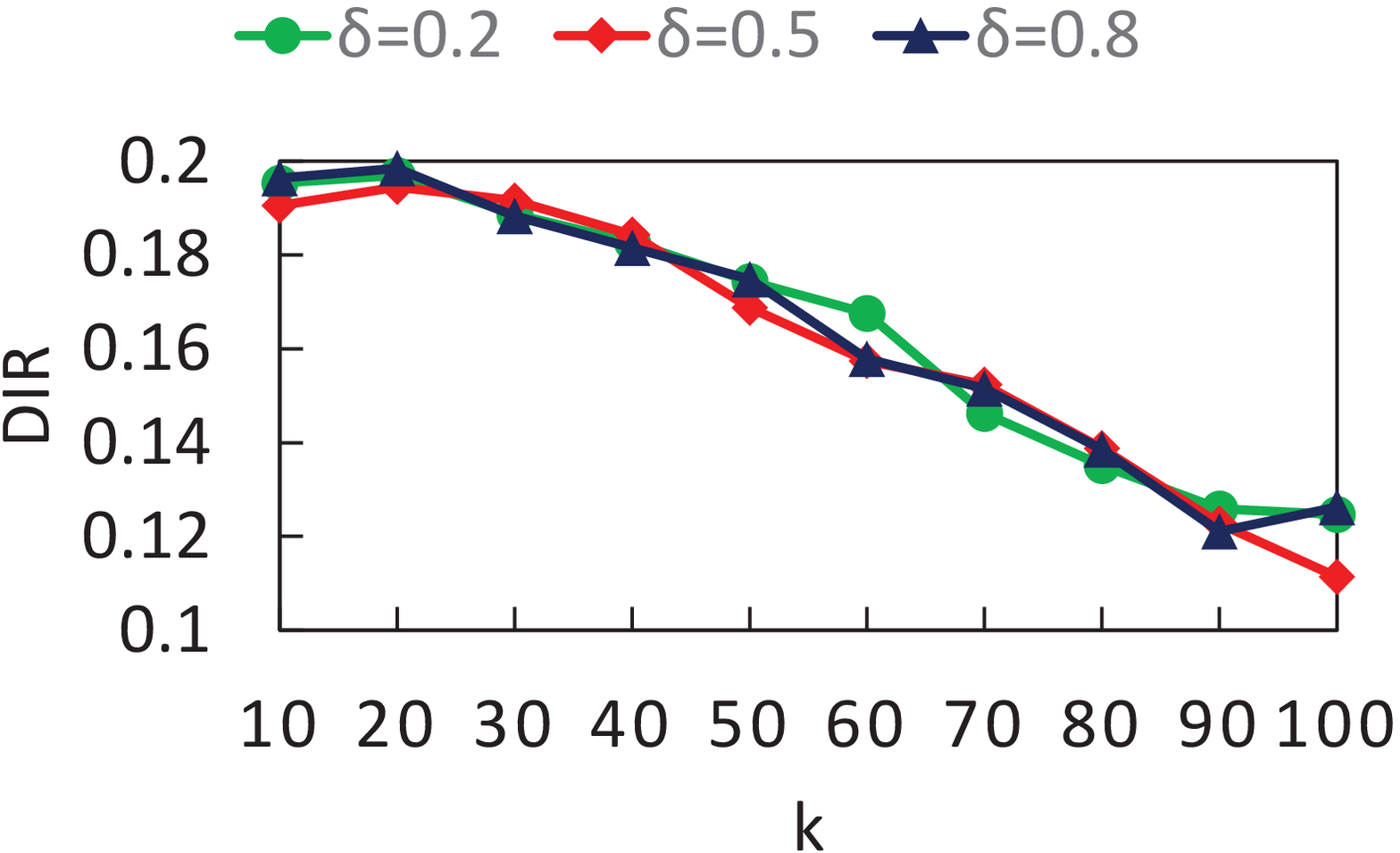}}
\quad
\subfigure[On Running Time]{
\label{double-k-rt}
\includegraphics[width=0.4\textwidth]{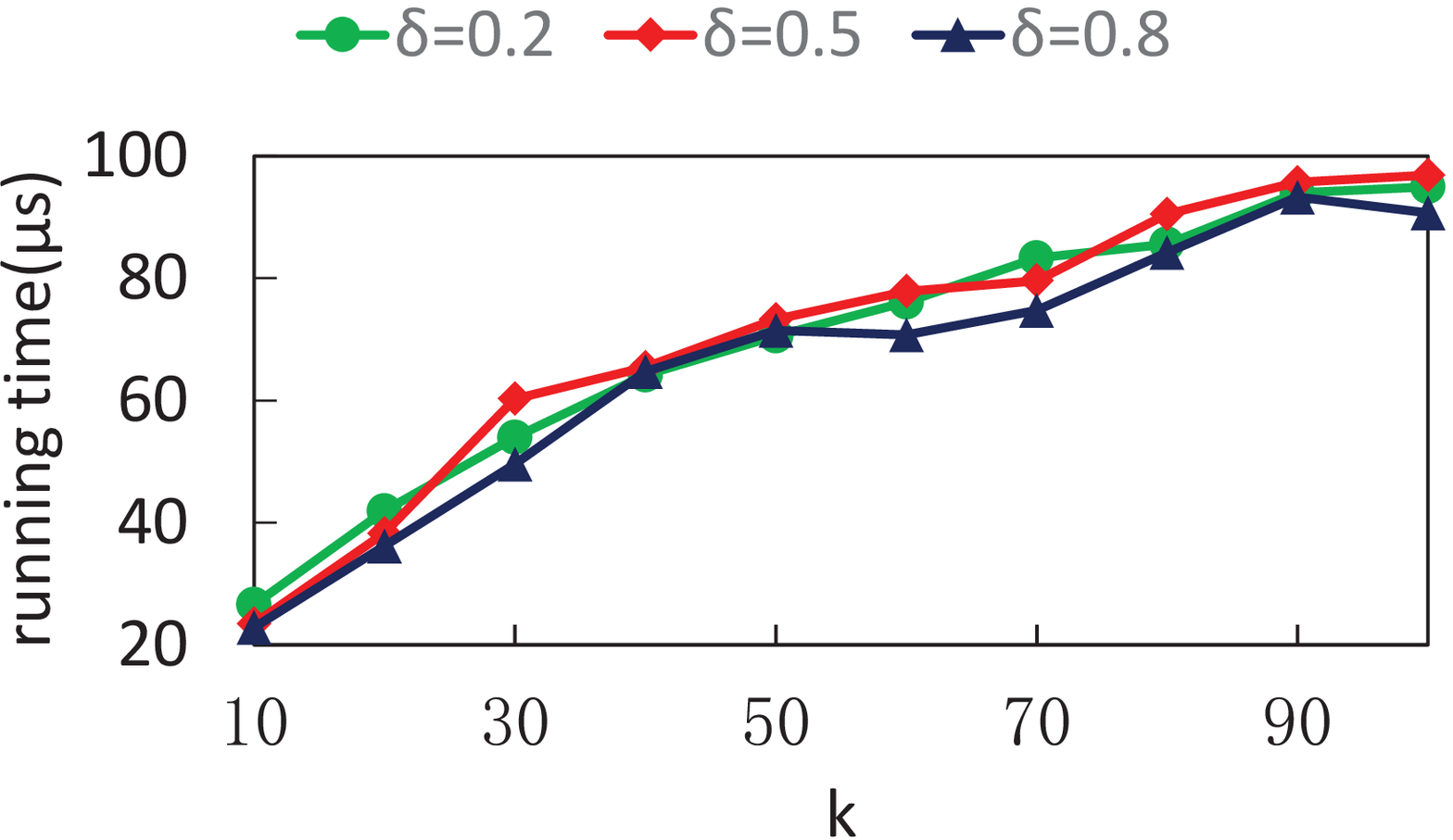}}
\caption{Impact of Scanned Elements $k$ (Numerical)}
\label{Figure 2}
\end{figure}
	In this part, the impact of the number of elements to be scanned, that is, $k$ is studied. A scatter plot of DIR value vs. $k$'s value is illustrated in Figure \ref{double-k-dir}. As observed in the figure, with $k$ increasing, the DIR value is mainly decreasing, except for a little rise in some local areas. This is true according to three different values of $\delta$, which means this experimental result holds as for various frequencies in sampling. The reason is that when $k$ gets larger, more elements are scanned to find the initial maximal PDG value. Then if the initial maximal PDG value becomes larger, the following elements will not have a larger PDG value. Thus, the last element which contributes less is selected and the DIR value gets smaller.
	
	Now we analyse the influence of $k$ on a single experiment's running time. From Figure \ref{double-k-rt}, we observe that the running time increases accordingly with the growth of $k$, regardless of $\delta$ values. This is because when $k$ becomes larger, during the procedure to find the $PDG_{\max}$ value, we have to scan more elements and thus, process more elements by computing their PDG values. Therefore, the computational overhead gets larger and then the running time of a single experiment increases.
\subsubsection{Impact of Number of Input Elements}
\begin{figure}
\centering
\subfigure[On DIR]{
\label{double-n-dir}
\includegraphics[width=0.4\textwidth]{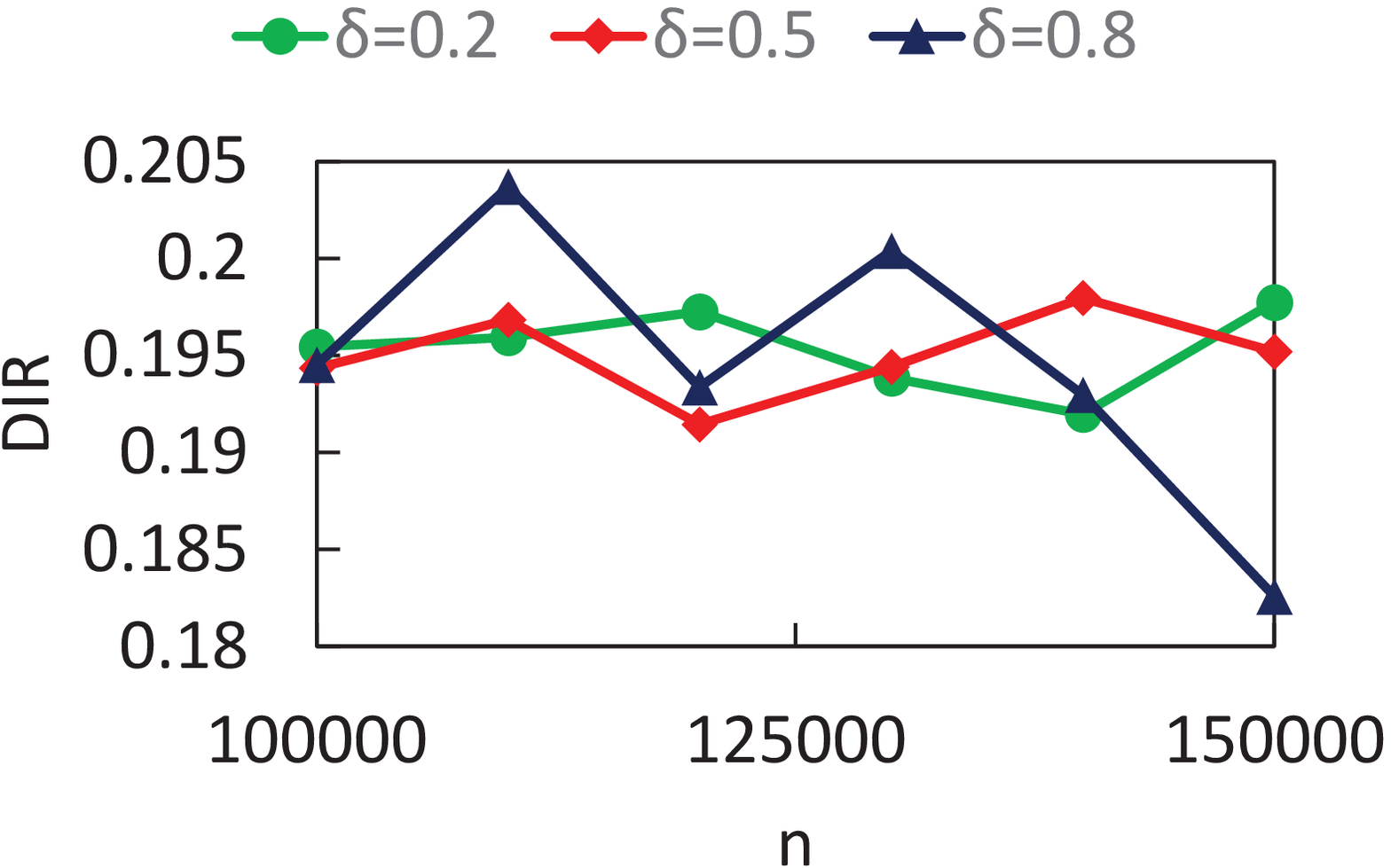}}
\quad
\subfigure[On Running Time]{
\label{double-n-rt}
\includegraphics[width=0.4\textwidth]{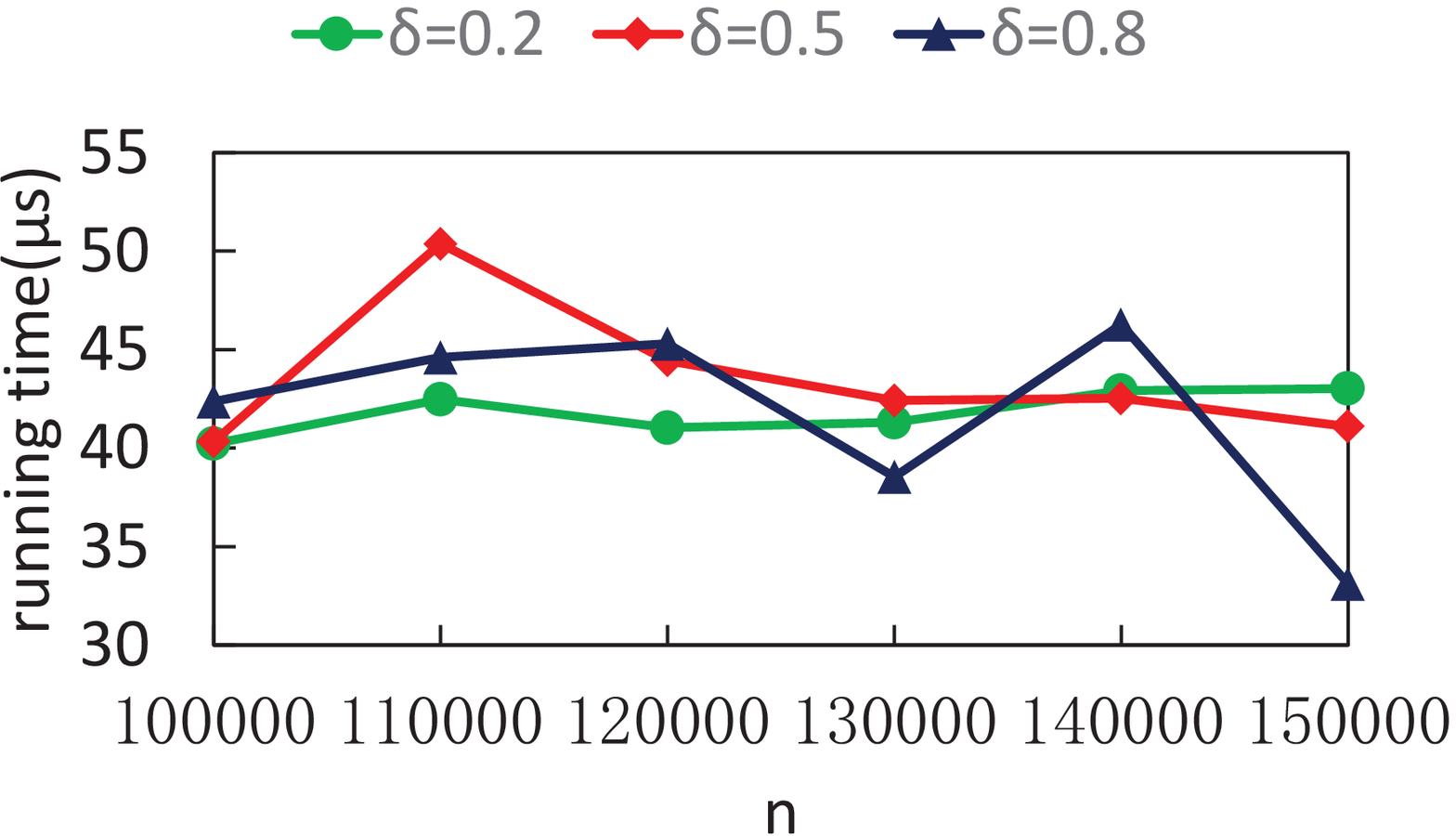}}
\caption{Impact of Input Size $n$ (Numerical)}
\label{Figure 3}
\end{figure}
	As for three $\delta$ values $0.2$, $0.5$ and $0.8$, with the input scale $n$ increasing, the corresponding DIR values are all going through several rises and falls. The plot with regard to the relationship between DIR value and $n$ is Figure~\ref{double-n-dir}. This is because when the scale of input data gets larger, the elements to be selected will be different and, meanwhile, the selected element's ability to increase the result set's variance will also be diverse.
	
	Also, the running time of a single experiment varies and fluctuates for various $\delta$ values when $n$ increases. This is illustrated in Figure~\ref{double-n-rt}. The running time is relevant to the number of elements processed in each experiment and also related to the data distribution in the selected samples. Here only the total number of input elements gets larger, but we have no idea how the data distribution is. Hence the running time may go through several rises and falls.
\subsubsection{Impact of Cardinality of Selected Samples}
\begin{figure}
\centering
\subfigure[On DIR]{
\label{double-a-dir}
\includegraphics[width=0.4\textwidth]{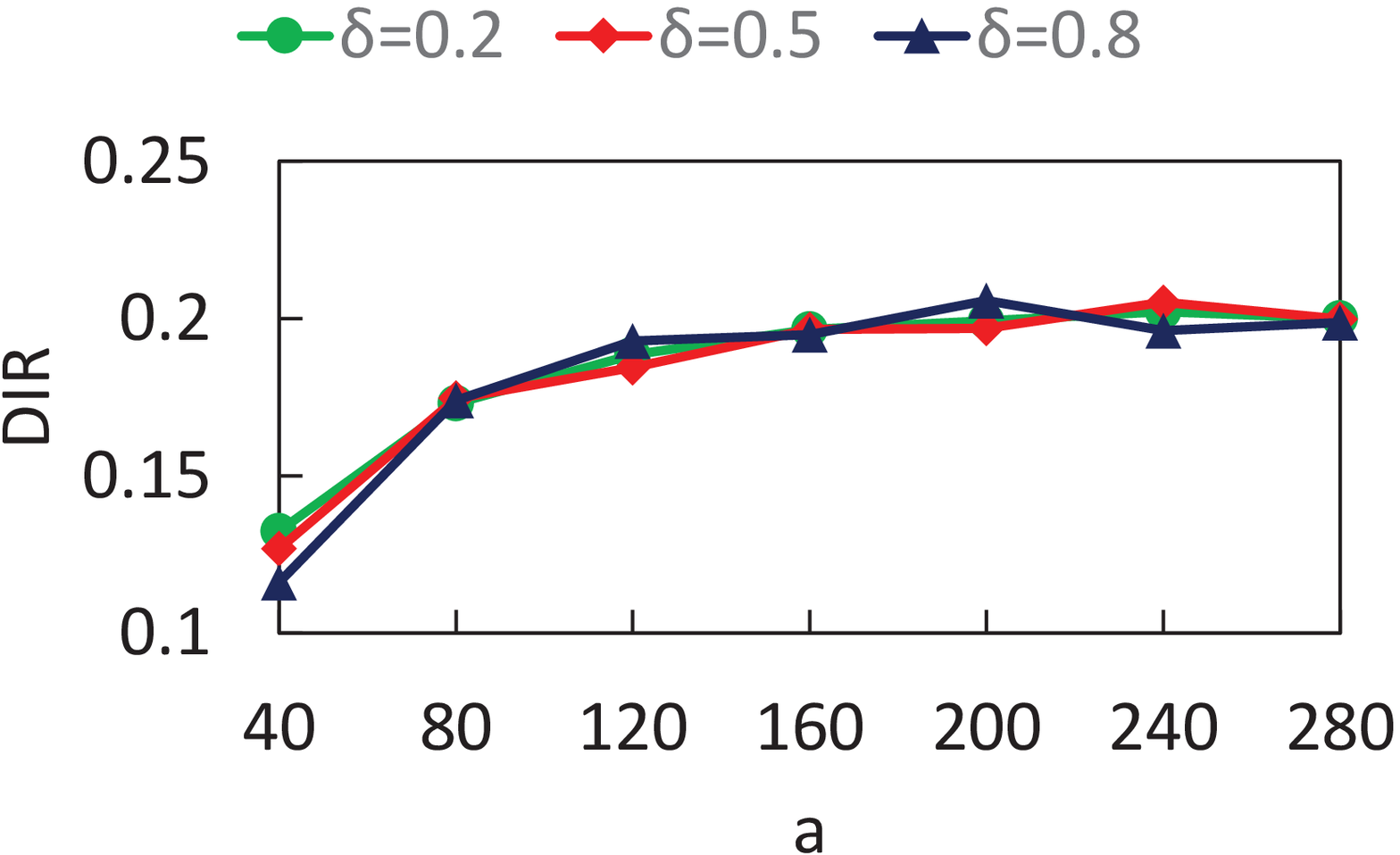}}
\quad
\subfigure[On Running Time]{
\label{double-a-rt}
\includegraphics[width=0.4\textwidth]{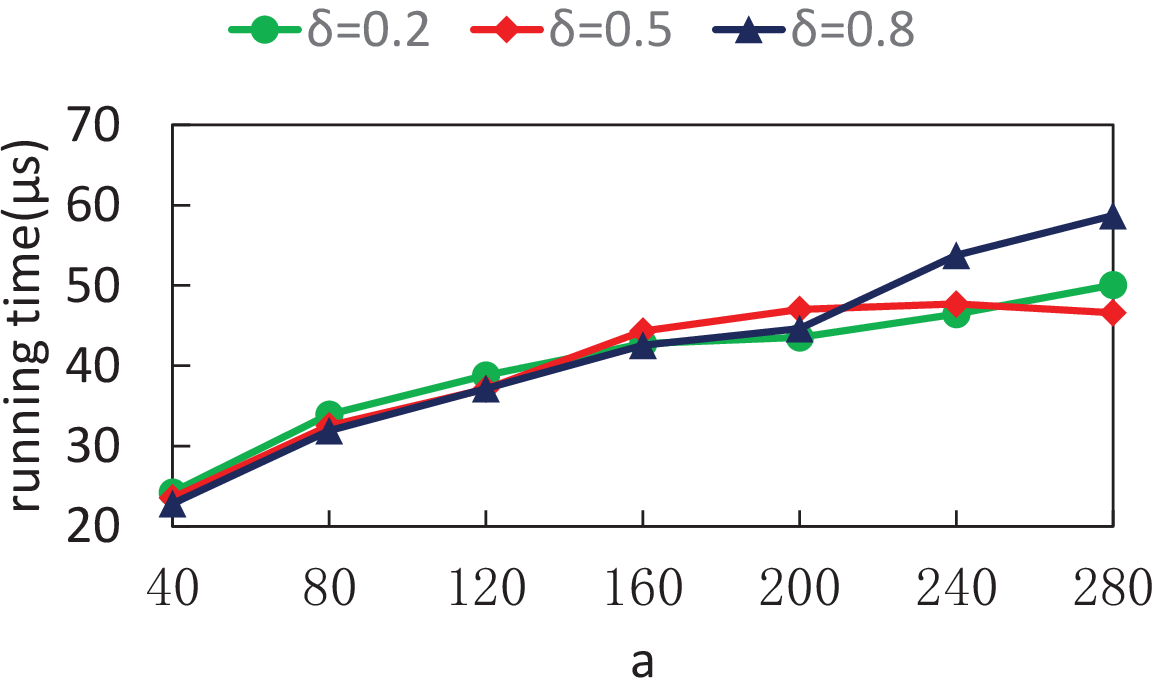}}
\caption{Impact of Sample Size $a$ (Numerical)}
\label{Figure 4}
\end{figure}
	As shown in Figure~\ref{double-a-dir}, no matter what $\delta$ is, when the sample size $a$ gets larger, the DIR becomes larger on the whole. Such experimental results are understandable in that a larger $a$ indicates a larger size of samples in one test. Then the larger the sample size is, the more possible that it is to find an element with a large enough PDG. Hence, DIR gets larger accordingly.
	
	If we consider the running time of a single experiment, we obtain the results illustrated in Figure~\ref{double-a-rt}. As for three different $\delta$ values, when the value of $a$ increases, the running time will get larger accordingly. This phenomenon is understandable in that if $a$ is larger, we have to deal with more data in each sampling experiment. When the processing time of an element does not change too much, the running time of an experiment will accordingly become longer.
\subsubsection{Impact of Number of selected Samples}
\begin{figure}
\centering
\subfigure[On DIR]{
\label{double-s-dir}
\includegraphics[width=0.4\textwidth]{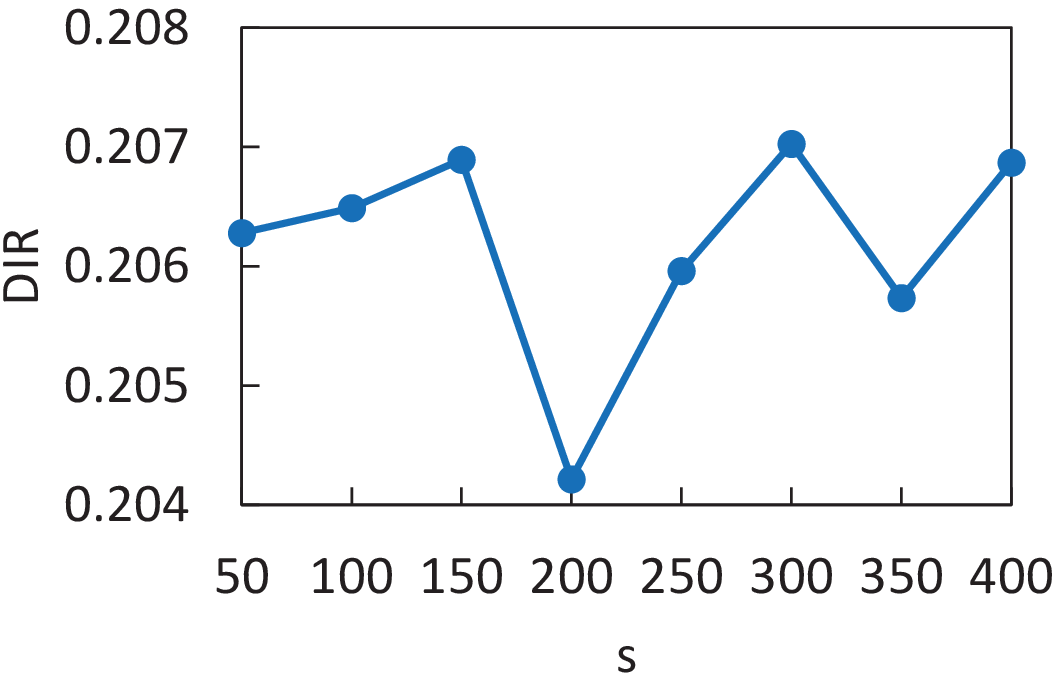}}
\quad
\subfigure[On Running Time]{
\label{double-s-rt}
\includegraphics[width=0.4\textwidth]{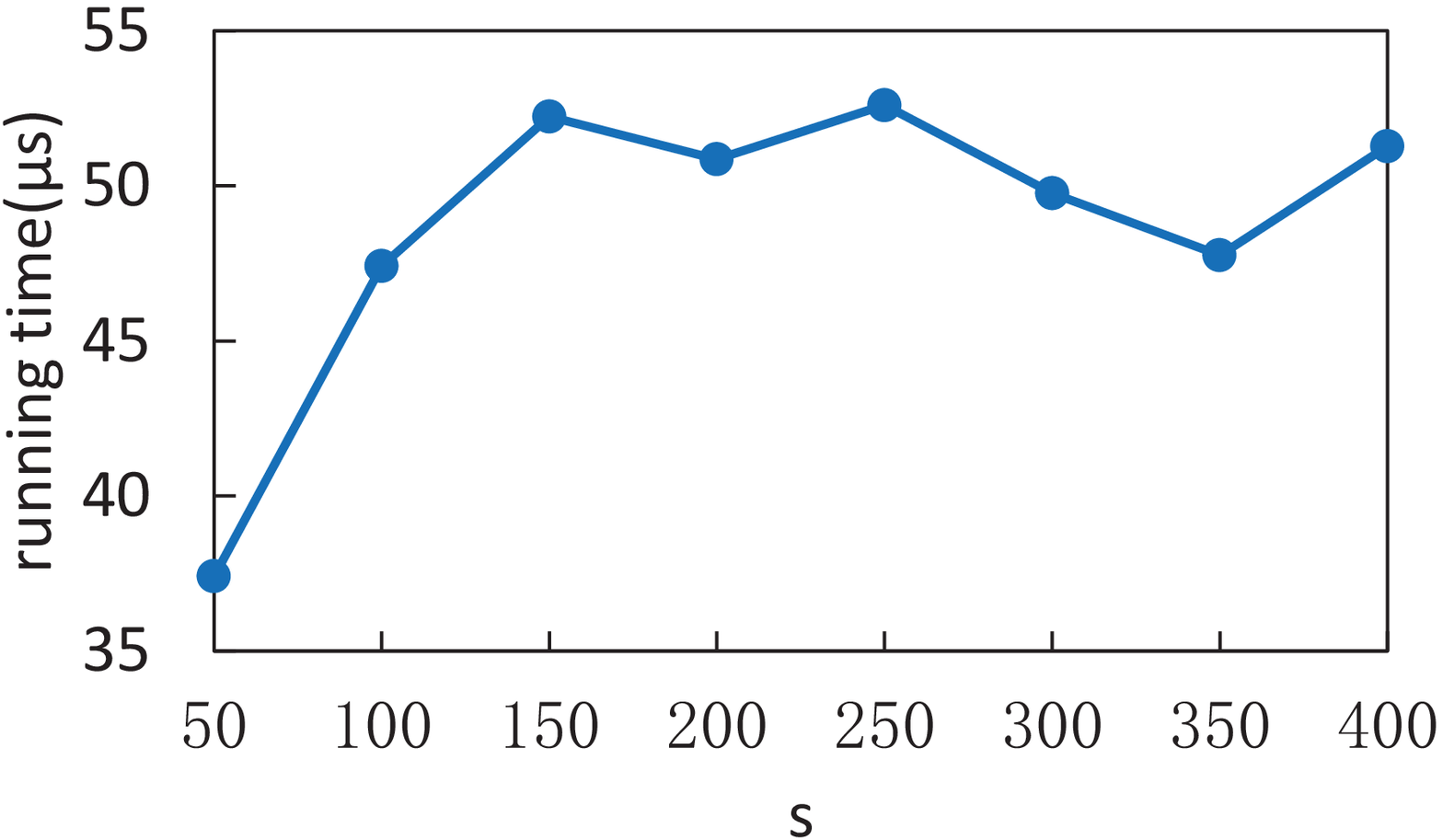}}
\caption{Impact of Sample Number $s$ (Numerical)}
\label{Figure 5}
\end{figure}	
	When we choose different numbers of samples, denoted as $s$, the DIR value varies in a fluctuating way correspondingly. The scatter plot is illustrated in Figure~\ref{double-s-dir}. This result is caused by the different values of $s$ and also, diverse selected samples.
	
	Additionally, the running time varies when $s$ changes. The scatter plot of the relationship between $n$ and the running time can be observed in Figure~\ref{double-s-rt}. The running time is also related to the data distribution of samples and which samples we choose.
\subsubsection{Discussion}
	Well, we cannot overlook the important experimental result that in Figure~\ref{Figure 1} to Figure~\ref{Figure 5}, the DIR value is all positive and in some cases, the DIR value can be quite large. This means that the result set is always more diverse after implementing the corresponding procedure of our proposed method and sometimes, the diversity is increased to a great extent. Therefore, the performance of our proposed methods to increase the diversity of results is pretty satisfactory and this proves the effectiveness of our proposed diversification method specific to numerical data.
\subsection{Experiments on String Data} \label{subsection: experiment string}
	In this section, we study how the five parameters affect the results of the proposed diversification method with regard to string data experimentally.
\subsubsection{Impact of Memory Size}
\begin{figure}
\centering
\subfigure[On DIR]{
\label{string-m-dir}
\includegraphics[width=0.4\textwidth]{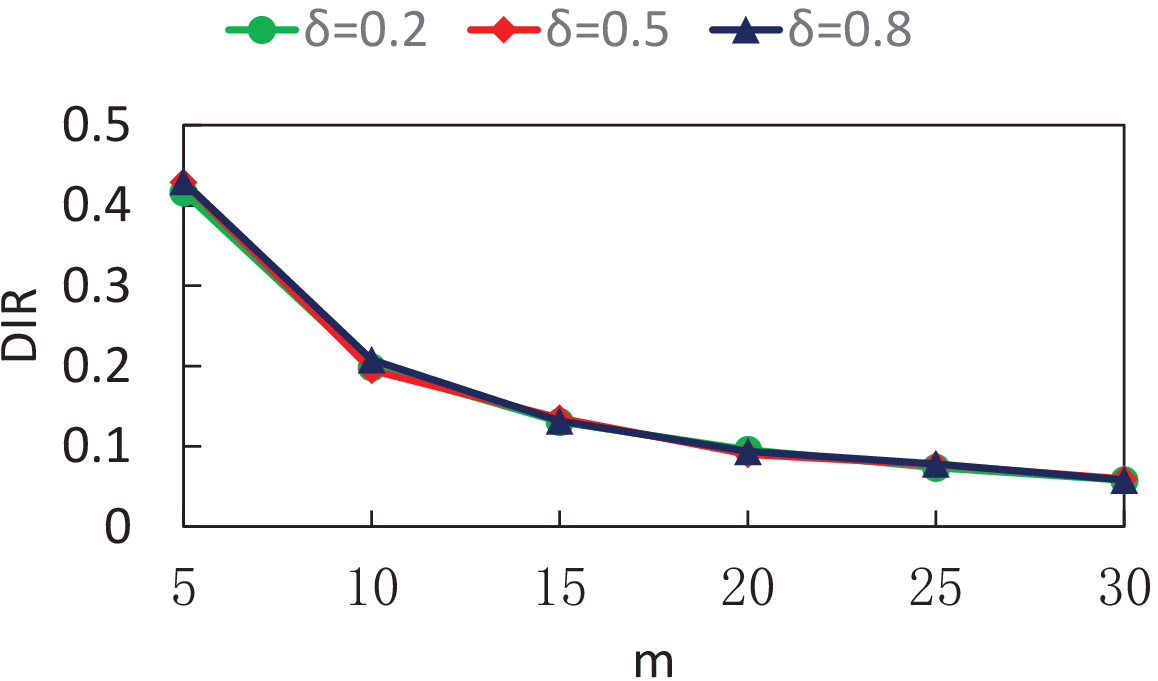}}
\quad
\subfigure[On Running Time]{
\label{string-m-rt}
\includegraphics[width=0.4\textwidth]{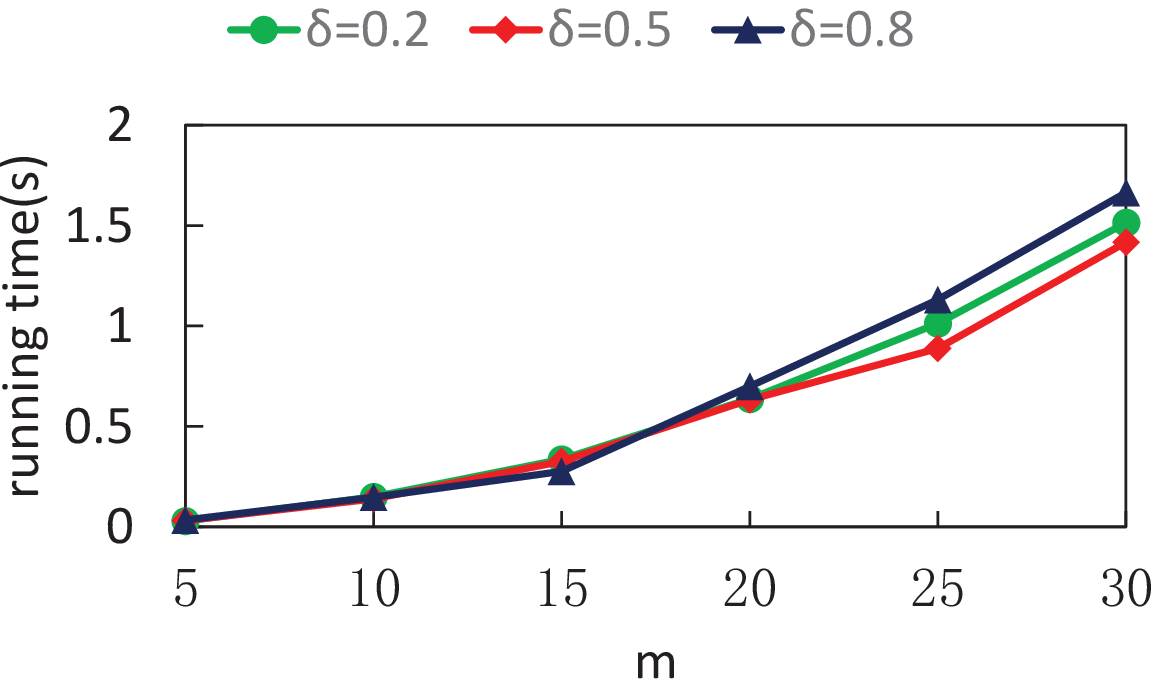}}
\caption{Impact of Memory Size $m$ (String)}
\label{Figure 6}
\end{figure}
	First, when the memory size $m$ becomes larger, the diversity increasing rate, i.e., DIR value, gets smaller and the velocity of this decline procedure gets lower. When the values of $\delta$ are various, which means that we select and use different number of samples, the results are consistent. Such results can be intuitively observed in Figure~\ref{string-m-dir}. The experimental results are in this way, since if the size of memory increases, the increase scale in diversity when a single element is replaced into memory becomes smaller. In detail, we describe the diversity of a string set as the sum of edit distances between each two elements. Then, when $m$ increases, exchanging only one element with another in memory will not contribute much to the overall diversity.
	
	Well, when $m$ gets larger, the running time of a single experiment tends to increase. This is the same no matter what the value of $\delta$ is. The relationship between the running time and the value of $m$ is intuitively illustrated in Figure~\ref{string-m-rt}. This is because when more elements are stored in memory, it takes more time to compute the original diversity, i.e., the sum of edit distances here, of the element set. Also, as for each element $\varphi$, it takes more time to compute its PDG value (demonstrated in Algorithm~\ref{Calculate-Element-PDG-2}). Therefore, the running time correspondingly increases with a larger $m$.
\subsubsection{Impact of Number of Elements to Be Scanned}
\begin{figure}
\centering
\subfigure[On DIR]{
\label{string-k-dir}
\includegraphics[width=0.4\textwidth]{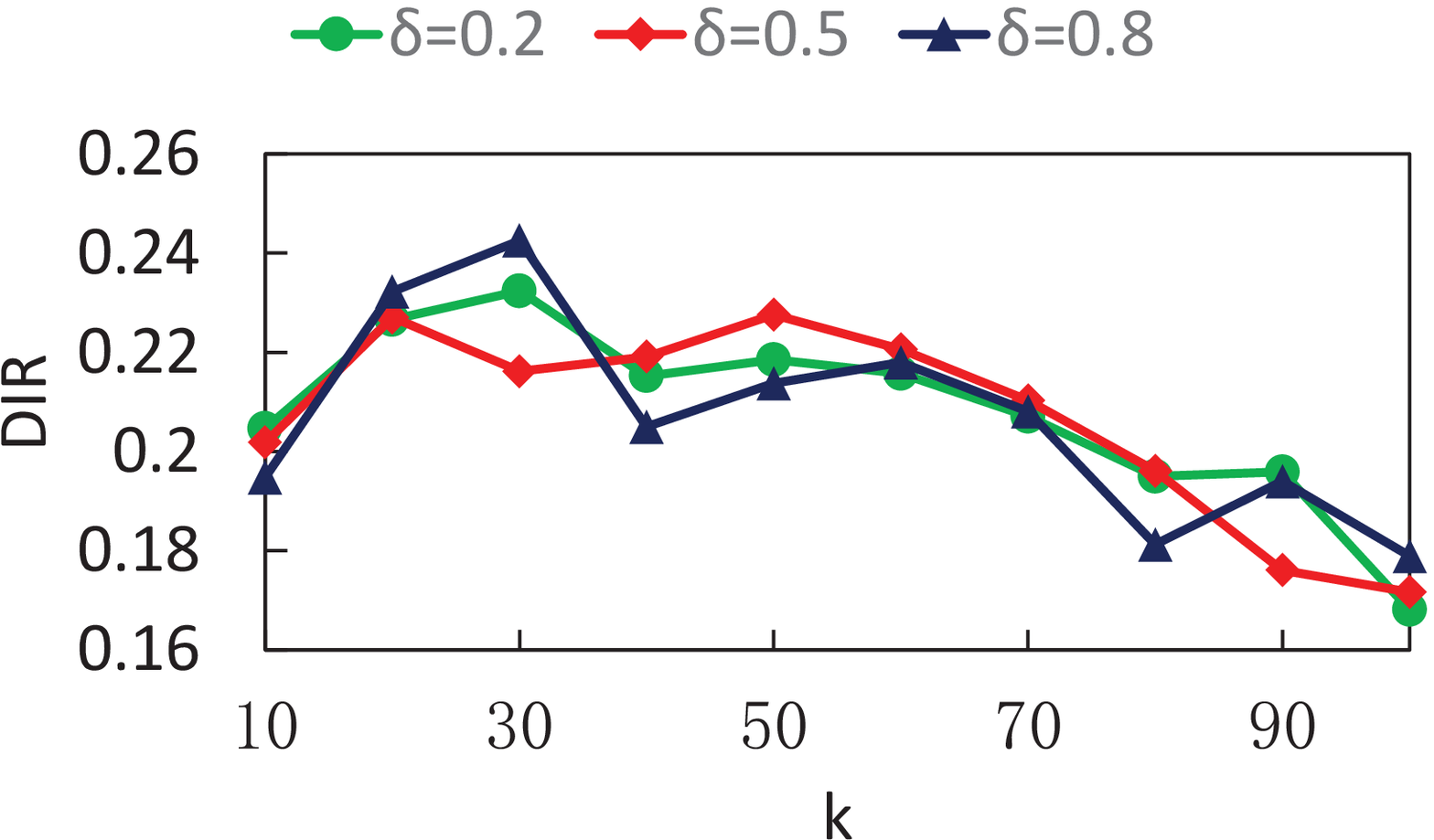}}
\quad
\subfigure[On Running Time]{
\label{string-k-rt}
\includegraphics[width=0.4\textwidth]{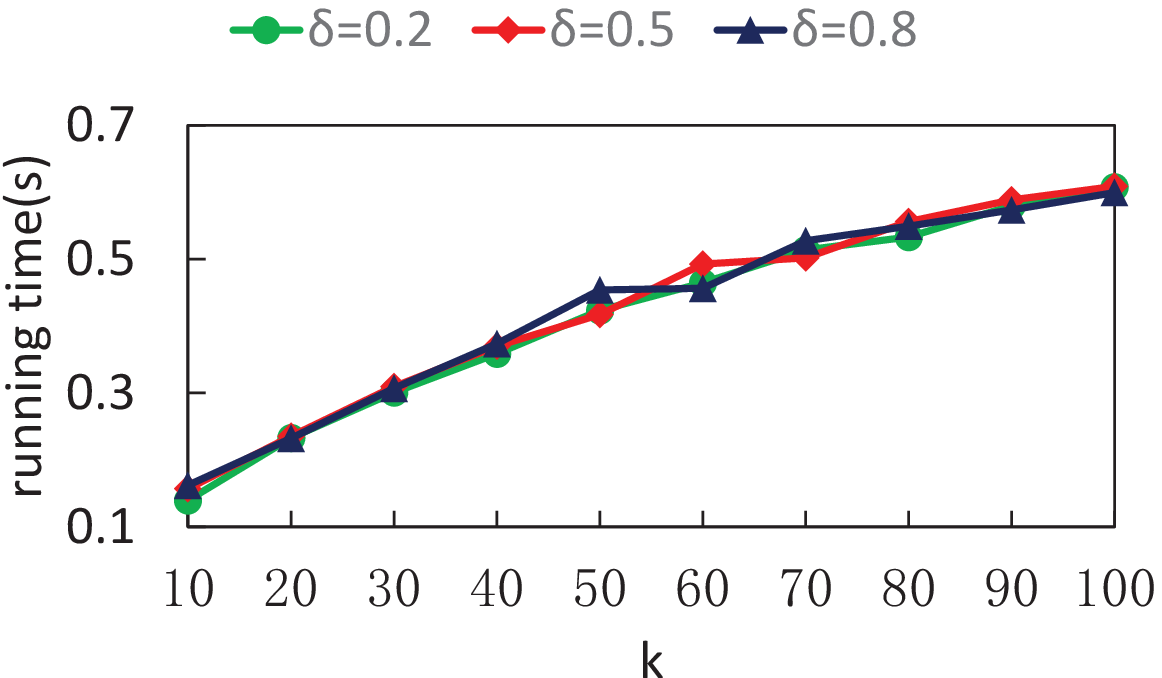}}
\caption{Impact of Scanned Elements $k$ (String)}
\label{Figure 7}
\end{figure}
	In Figure~\ref{string-k-dir}, we can easily find that when the number of elements to be scanned $k$ keeps on increasing, the DIR value first rises a little and then falls down continuously except for some fluctuations. Here whether DIR will increase or decrease is determined by the data distribution in each sample and the features of string data. If $k$ rises, and then more elements are scanned, we either can access the elements with larger PDG values in the rear part of input data or cannot find any elements with a larger PDG value than the original $PDG_{max}$. Thus, accordingly, the DIR value will either go up or fall down. Also, note that this variation tendency is the same with $\delta=0.2$, $\delta=0.5$ and $\delta=0.8$.
	
	As to the running time of a single experiment, we find that with $k$ getting larger, the running time will increase on the whole regardless of $\delta$ values, and this result can be obtained from Figure~\ref{string-k-rt}. The reason for this phenomenon is that during the procedure of our proposed diversification framework, we have to first scan the first $k$ elements to find the $PDG_{\max}$ value. When $k$ increases, we spend more time calculating each element's PDG value and pick out the maximal value among them. This is why the running time tends to become longer.
\subsubsection{Impact of Input Elements Number}
\begin{figure}
\centering
\subfigure[On DIR]{
\label{string-n-dir}
\includegraphics[width=0.4\textwidth]{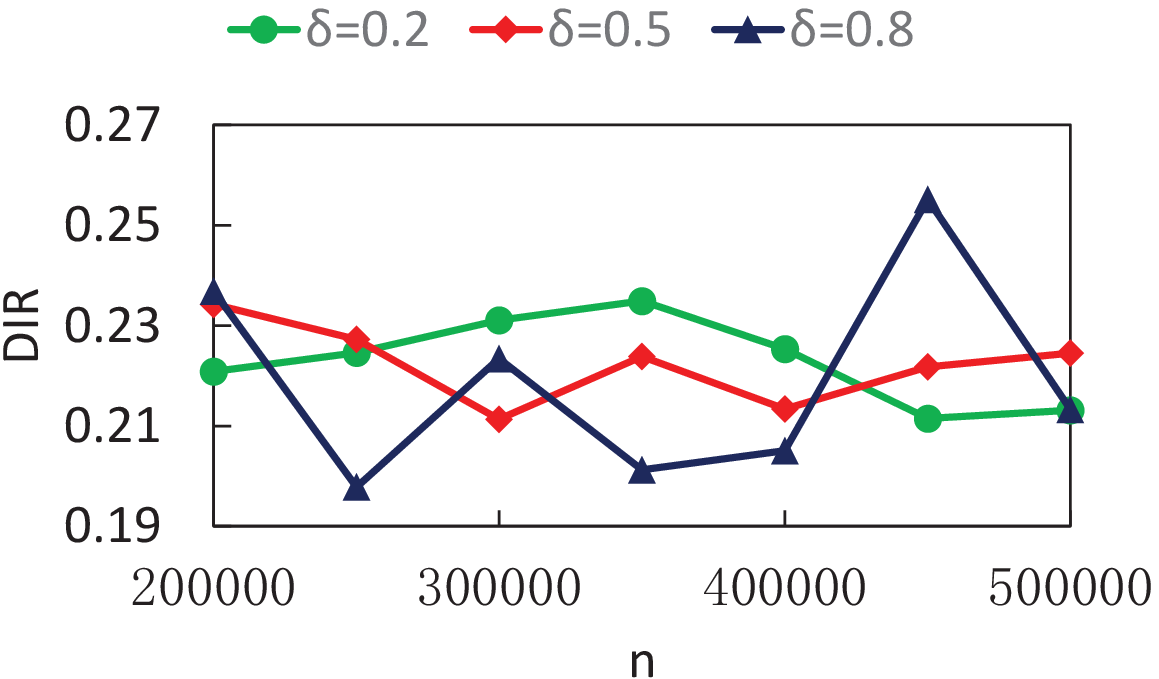}}
\quad
\subfigure[On Running Time]{
\label{string-n-rt}
\includegraphics[width=0.4\textwidth]{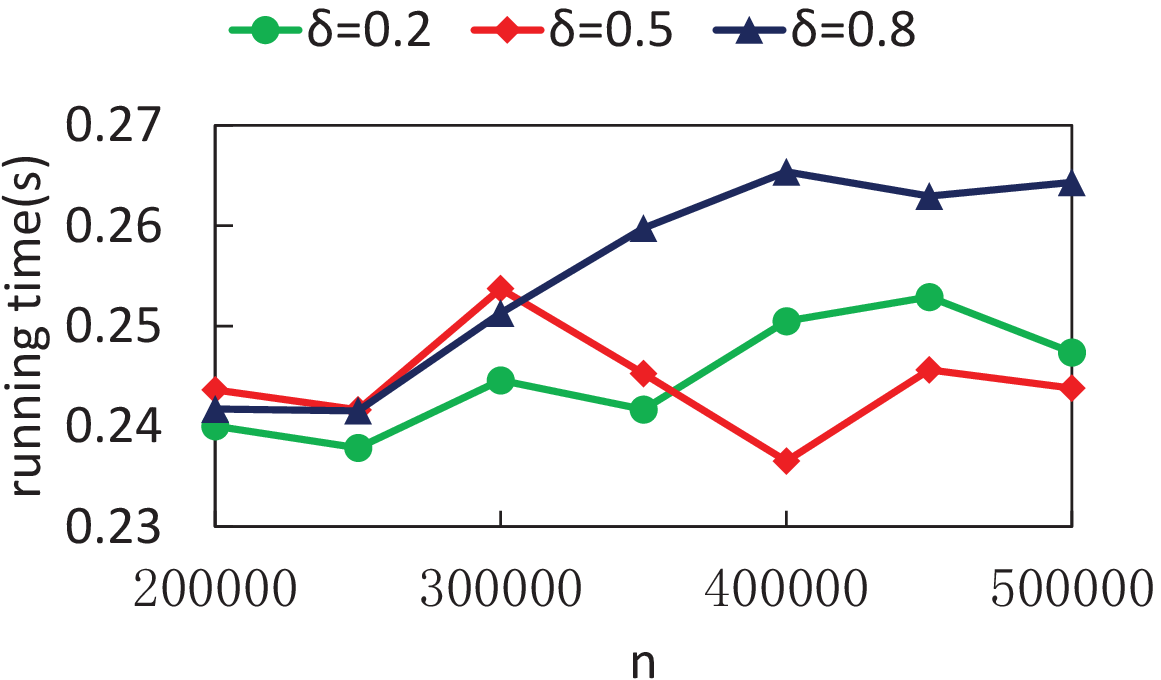}}
\caption{Impact of Input Size $n$ (String)}
\label{Figure 8}
\end{figure}
	In this section, we study the impact of input data scale $n$. As shown in Figure~\ref{string-n-dir}, similar to the experimental results specific to numerical data, here as to strings, when $n$ increases, the DIR value has several rises and falls regardless of the values of $\delta$. This is because when $n$ changes, the element that we finally choose will change accordingly. The ability of various elements to diversify results is different and hence, DIR values will fluctuate.
	
	Similarly, when the value of $n$ increases, the running time of a single experiment tend to fluctuate. The detailed change of running time is illustrated in Figure~\ref{string-n-rt}, with three different $\delta$ values. This is because the running time is determined by the exact number of elements to process and the data distribution in each selected sample. When $n$ changes, we may select various samples and thus, process different number of elements. All these factors contribute to the fluctuation of the running time.
\subsubsection{Impact of Cardinality of Selected Samples}
\begin{figure}
\centering
\subfigure[On DIR]{
\label{string-a-dir}
\includegraphics[width=0.4\textwidth]{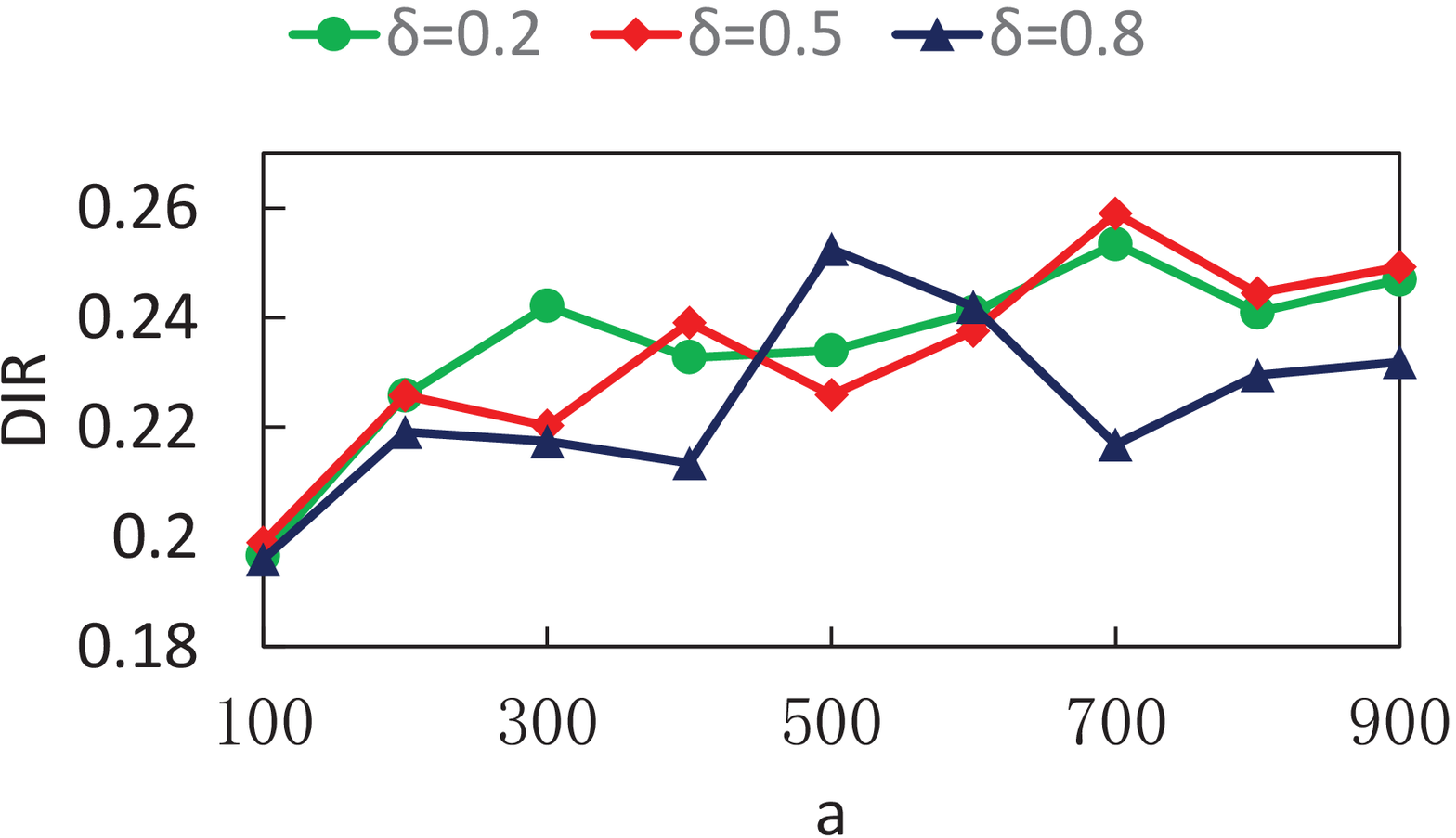}}
\quad
\subfigure[On Running Time]{
\label{string-a-rt}
\includegraphics[width=0.4\textwidth]{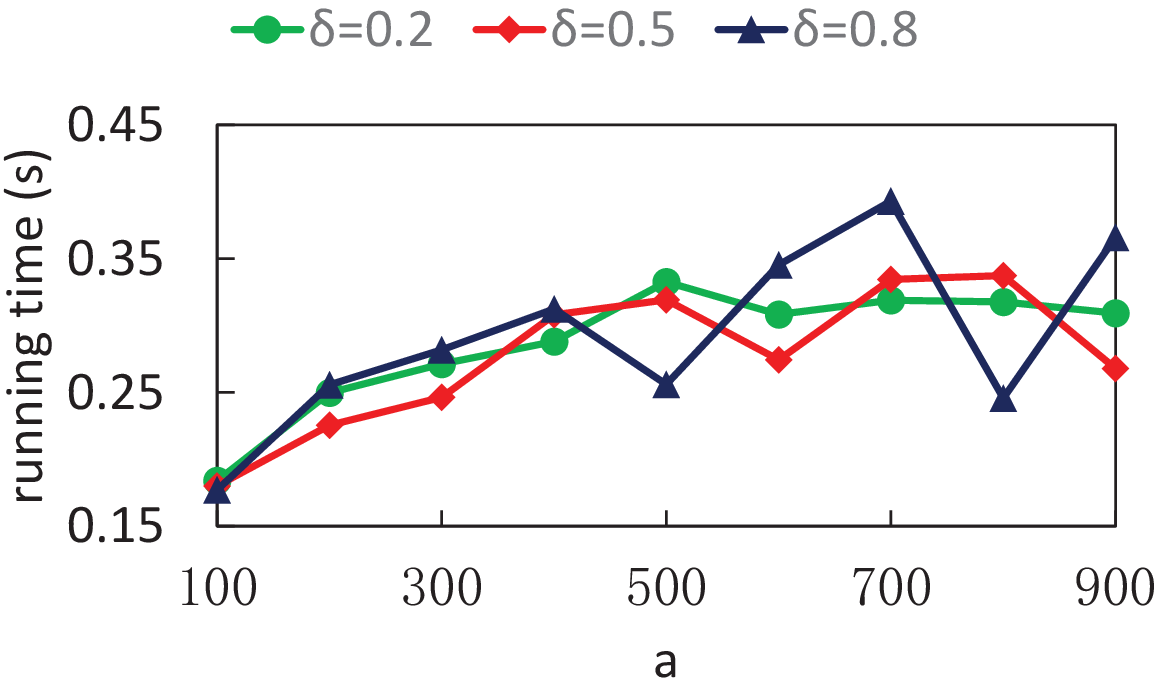}}
\caption{Impact of Sample Size $a$ (String)}
\label{Figure 9}
\end{figure}	
	Apart from all the factors discussed above, the sample size $a$ also has a significant impact on the effectiveness of our method, which is described as DIR values. Overall, the DIR value increases with the growing of $a$ except for some local fluctuations, whatever the value of $\delta$ is. This experimental result can be clearly observed in Figure~\ref{string-a-dir}. The reason is that when the sample size gets larger, the number of elements that we can deal with in a single test will become larger. Then the probability of meeting an element with a larger PDG value and promoting the growth of DIR value is larger. Therefore, when we carry out a fixed number of sampling experiments, the DIR value becomes larger.
	
	Then we pay attention to $a$'s influence on the running time of a single experiment. The change of running time with $a$ increasing can be intuitively observed in Figure~\ref{string-a-rt}. The running time mainly goes through several rises and falls, but on the whole, it tends to increase. This is because when the sample size $a$ gets larger, if the processing time of each element stays the same, an experiment's running time will get larger. However, when $a$ becomes larger, it is more likely that a following element has a PDG value larger than $PDG_{\max}$ and then terminates this exact experiment. If so, the running time will decrease. This is why the running time goes through fluctuations.
\subsubsection{Impact of Number of selected Samples}	
\begin{figure}
\centering
\subfigure[On DIR]{
\label{string-s-dir}
\includegraphics[width=0.4\textwidth]{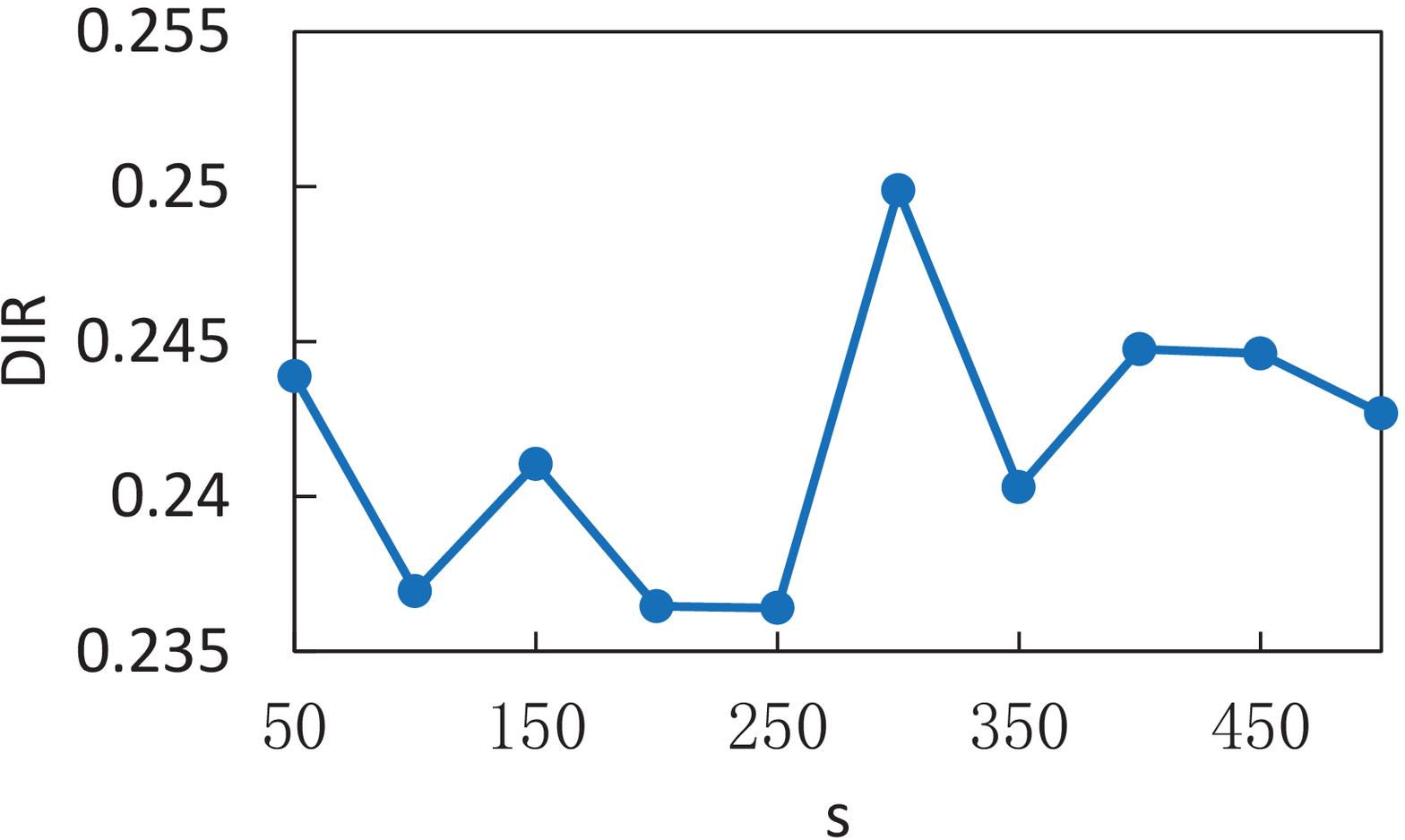}}
\quad
\subfigure[On Running Time]{
\label{string-s-rt}
\includegraphics[width=0.4\textwidth]{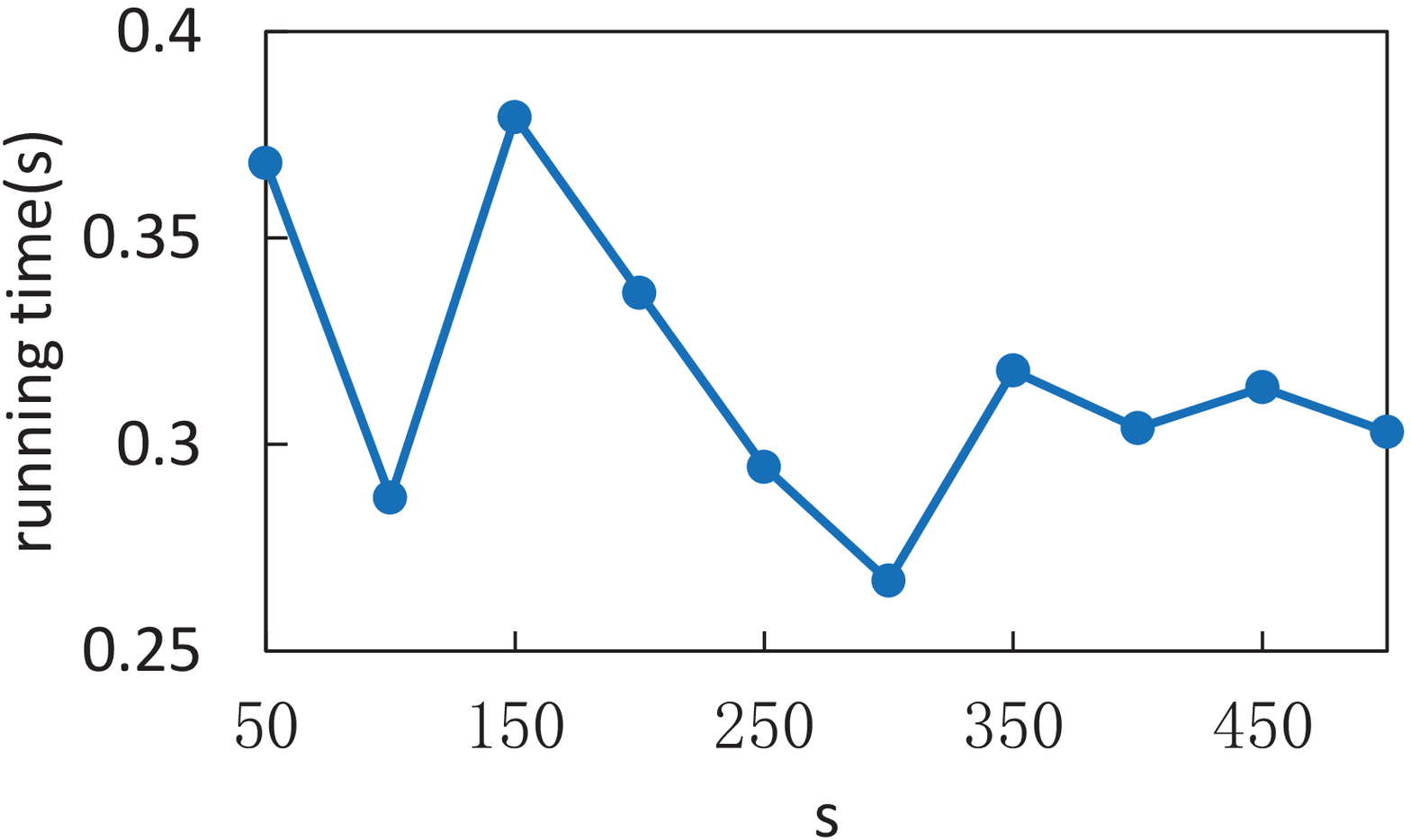}}
\caption{Impact of Sample Number $s$ (String)}
\label{Figure 10}
\end{figure}
	When we vary the number of sampling experiments executed, i.e., $s$, the DIR values will vary. This is because the number of samples and the concrete content in the samples determines which element to be selected at last and how much diverse it can increase the result set by. Figure~\ref{string-s-dir} intuitively demonstrates such experimental results.
	
	Then when $s$ tends to increase, the running time of a single experiment will fluctuate. This is also because the running time is relevant to the data distribution in the selected samples and which samples we have selected. Figure~\ref{string-s-rt} clearly illustrates the change of the running time with $s$ varying.
\subsubsection{Discussion}
	Note that in all of our sampling experiments, the diversity increasing rate is greater than zero and large enough. As the DIR value describes the degree that a set's diversity is increased, this result shows that our proposed method with regard to string data have satisfactory performance in diversifying the final result set.
	
	As thoroughly discussed before, using a large number of experiments and corresponding experimental results, we can learn that the definitions, algorithms and methods presented are meaningful and effective in the diversification procedure of both numerical data and string data. Therefore, our methods contribute much to providing users with more information and improving users' satisfaction.
\subsection{Scalability Experiments}
	In this part, we focus on the scalability of our proposed diversification framework and study its ability to deal with big data. We respectively carry out the scalability tests of both numerical data and string data, and then study and analyse them thoroughly. We also compare our approach with a MaxMin algorithm~\cite{DBLP:conf/www/GollapudiS09}, a greedy diversity query processing algorithm, whose time complexity is an approximately linear. We implement the algorithm by ourselves.

\subsubsection{Scalability Experiments of Numerical Data} \label{subsubsection: Scalability Experiments of Numerical Data}

\begin{figure}
\centering
\subfigure[On Numerical Data]{
\label{Scalability Experiments On Numerical Data}
\includegraphics[width=0.4\textwidth]{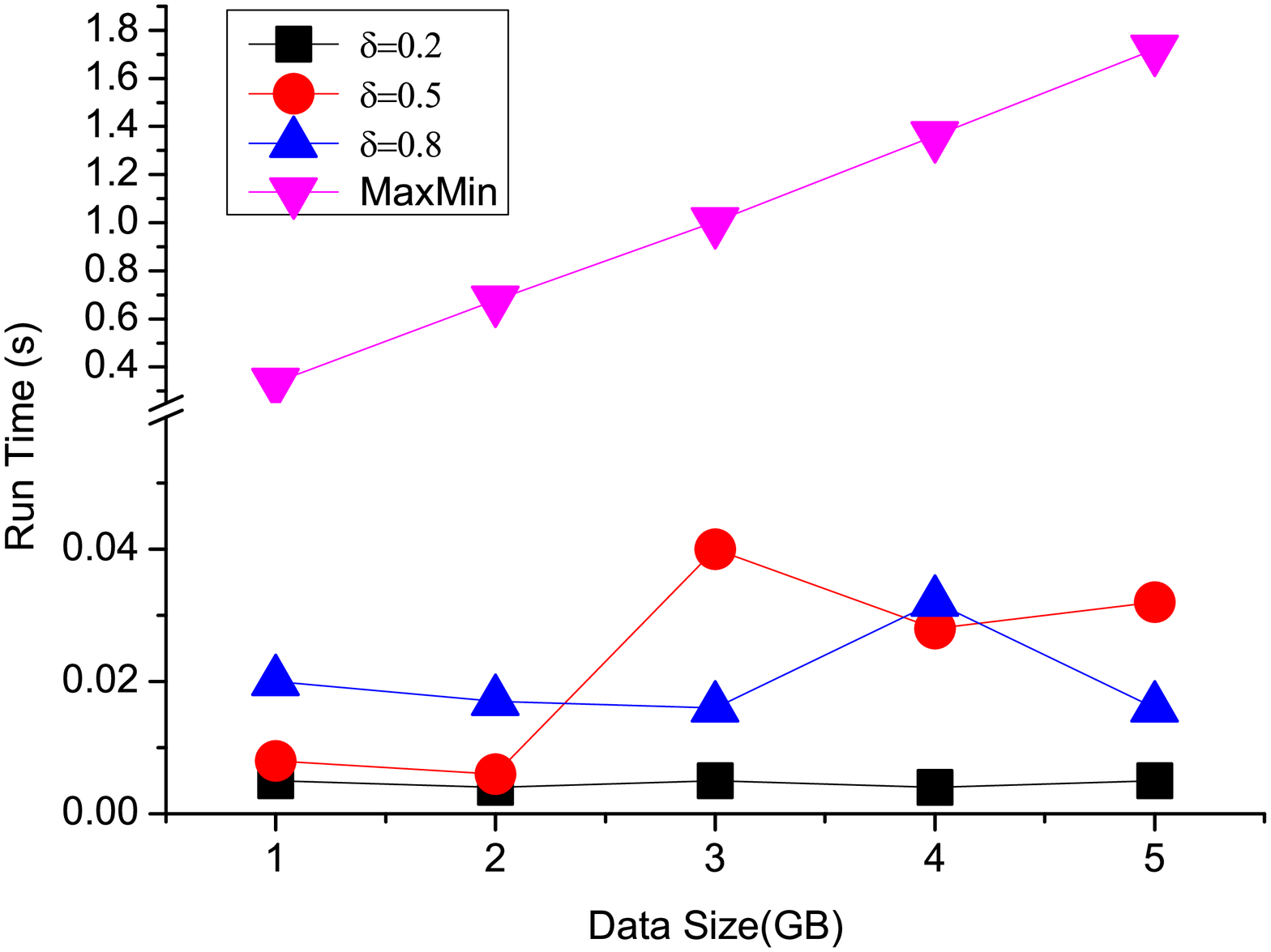}
}
\quad
\subfigure[On String Data]{
\label{Scalability Experiments On String Data}
\includegraphics[width=0.4\textwidth]{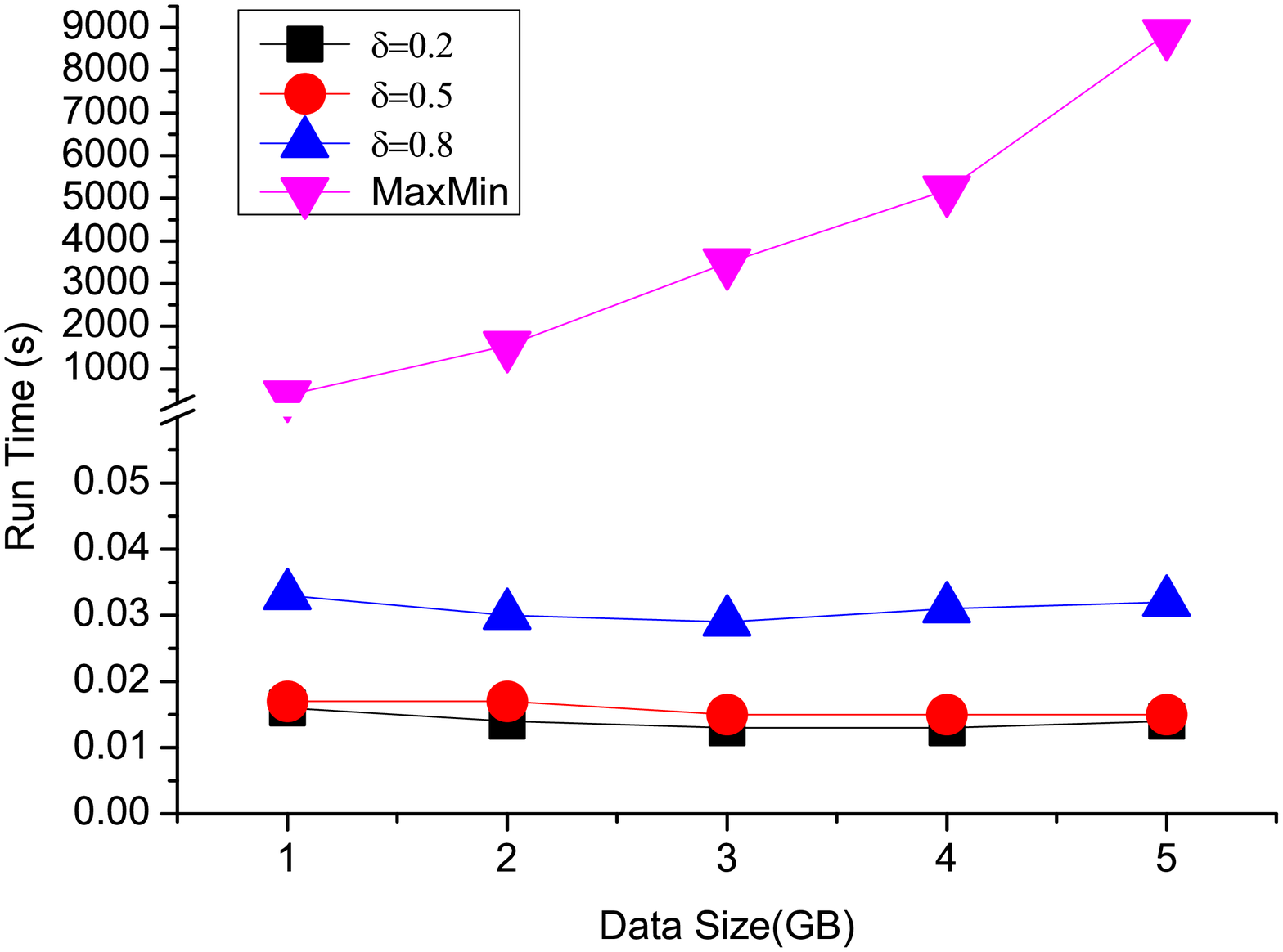}
}
\caption{Scalability Experiments}
\label{Figure 11}
\end{figure}

	When we try to carry out the scalability test of numerical data, which is in the form of double values here, we use randomly generated data which is uniformly distributed in $[-1000,1000]$. Additionally, the utilized data amount varies from 1.0 GB to 5.0 GB in order to test how our method works when dealing with big numerical data. Note that only $n$ varies here, other factors' values are set as constant values:  $m=10$, $k=20$, $a=150$. The values of $s$ and $\delta$ are set as those in Section~\ref{subsection: experiment double} and Section~\ref{subsection: experiment string}.
	
	The running time of a single experiment can well describe our proposed method's effectiveness and efficiency. The scatter plot in Figure~\ref{Scalability Experiments On Numerical Data} illustrates the relationship between the data amount and the running time. In this figure, various $\delta$ values represent different numbers of selected samples.
	
	It is intuitive from the figure that the running time is irrelevant with the data amount and the running time is less than 0.045 seconds. This means that we can always obtain the diversification results within a reasonable period of time and the performance of our proposed framework is quite satisfactory in terms of processing time when dealing with big numerical data. Therefore, our framework has a good ability to diversify big numerical data.

From comparison, our approach outperforms existing algorithm significantly. It is because we use effective sampling strategy and reduce the computation cost.
\subsubsection{Scalability Experiments of String Data}
	In this part, we analyse the scalability of our proposed framework when dealing with big data in string type. The data set is randomly generated string data which can be composed of 4 to 8 characters from $\{a,b,\cdots,z\}\cup \{A,B,\cdots,Z\}$ and follows uniform distribution. The amount of the data used ranges from 1.0 GB to 5.0 GB. In this way, we can analyse the scalability of our proposed method when processing big data in string type. Note that only $n$ changes here, so other parameters are set as constant values: $m=10$, $k=20$, and $a=200$. Also, $s$ and $\delta$ are set as the same values demonstrated in Section~\ref{subsubsection: Scalability Experiments of Numerical Data}.
	
	We also utilize the running time of a single experiment as an indicator of the performance of the proposed method and how the running times changes with the data amount is clearly shown in Figure~\ref{Scalability Experiments On String Data}.
	
	It can be seen from the figure that a single experiment's running time has nothing to do with the data amount and is no more than 0.035 seconds. This scalability test shows that our proposed method is qualified in effectiveness and efficiency when processing big data in string type. That is to say, our proposed diversification framework is capable of increasing the diversity of resulted string results when faced with big data.

From comparison, our approach is faster than MaxMin thousands of times. Our approach achieves higher performance due to our sampling strategy and avoid many costly computation of edit distance between strings.
\section{Conclusions} \label{section: conclusions}
	These years, in query processing, many diversification \linebreak[4]methods with regard to normal-scale data are proposed to meet users' information needs. However, most of the methods are not suitable in the area of big data due to big data's huge amount feature. Then how to diversify returned results when dealing with big data is well worth researching.

	In this work, we firstly propose a diversification framework which can solve the challenges that existing diversification methods are faced with when dealing with big data. This framework  processes input data online and the computational overhead as well as space overhead are low. Next, we concretely implement this framework in the area of two commonly-used data types: numerical data and string data and then design corresponding implementation algorithms. Finally, we carry out extensive experiments on real data to evaluate our proposed framework, examine the influence that $m$, $k$, $n$, $a$ and $s$ can respectively exert on the methods' performance as well as effectiveness. Additionally, scalability experiments are conducted on synthetic data to evaluate the framework's ability to process big data and demonstrate that the proposed approach outperforms the existing linear approach due to the randomized strategy.
	
	Our proposed diversification framework can only replace one element into memory to diversify the final result set. However, if we can use several caches to store elements which can be considered and replaced into memory together, we can increase the diversity of the final result set to a larger extent. This is an interesting yet challenging direction for further research.

    From the experimental results, the run time behaves linear to the memory space indicating that all elements in the memory will be scanned for each new incoming stream element. For efficiency issues, a interesting future work is to design a light weight index in the memory to avoid redundant work in subsequent PDG computations.
	
\bibliographystyle{elsarticle-num}
\bibliography{ref}

\end{spacing}
\end{document}